\newcommand{\idlow}[1]{\mathord{\mathcode`\-="702D\it #1\mathcode`\-="2200}}
\newcommand{\idl}[1]{\ensuremath{\idlow{#1}}}
\def\listnameplural{splay-lists}
\def\listname{splay-list}
\def\Listname{Splay-List}
\def\listcodename{splaylist}
\def\Listcodename{Splaylist}
\title{The \Listname: A Distribution-Adaptive Concurrent Skip-List}
\author{Vitaly Aksenov}{ITMO University}{aksenov.vitaly@gmail.com}{}{}
\author{Dan Alistarh}{IST Austria}{dan.alistarh@ist.ac.at}{}{}
\author{Alexandra Drozdova}{ITMO University}{drsanusha1@gmail.com}{}{}
\author{Amirkeivan Mohtashami}{Sharif University}{akmohtashami97@gmail.com}{}{}
\authorrunning{V. Aksenov, D. Alistarh, A. Drozdova and A. Mohtashami}
\keywords{Dummy keyword}
\newcommand{\myparagraph}[1]{\noindent\textbf{#1}}
\newcommand{\mynote}[3]{
    \fbox{\bfseries\sffamily\scriptsize#1}
    {\small$\blacktriangleright$\textsf{\emph{\color{#3}{#2}}}$\blacktriangleleft$}}
\newcommand{\mynote}[3]{}}
\newcommand{\EE}{\mathbb{E}}
\newcommand{\dan}[1]{\mynote{Dan}{#1}{green}}
\newcommand{\akm}[1]{\mynote{Keivan}{#1}{blue}}
\begin{document}
\nolinenumbers

\maketitle

\begin{abstract}
The design and implementation of efficient concurrent data structures has seen significant attention. 
However, most of this work has focused on concurrent data structures providing good \emph{worst-case} guarantees. 
In real workloads, objects are often accessed at different rates, since access distributions may be non-uniform. Efficient distribution-adaptive data structures are known in the sequential case, e.g. the splay-trees; however, they often are hard to translate efficiently in the concurrent case.    

In this paper, we investigate distribution-adaptive concurrent data structures, and propose a new design called the splay-list. At a high level, the splay-list is similar to a standard skip-list, with the key distinction that the height of each element adapts dynamically to its access rate: popular elements ``move up,'' whereas rarely-accessed elements decrease in height. We show that the splay-list provides order-optimal amortized complexity bounds for a subset of operations, while being amenable to efficient concurrent implementation. 
Experimental results show that the splay-list can leverage distribution-adaptivity to improve on the performance of classic concurrent designs, and can outperform the only  previously-known distribution-adaptive design in certain settings. 

\end{abstract}


\section{Introduction}

The past decades have seen significant effort on designing efficient concurrent data structures, leading to fast variants being known for many classic data structures, such as hash tables, e.g.~\cite{Michael, HSBook}, skip lists, e.g.~\cite{FraserThesis, LazySkiplist, DougSkiplist}, or search trees, e.g.~\cite{Ellen10, Natarajan14}. 
Most of this work has focused on efficient concurrent variants of data structures with optimal \emph{worst-case} guarantees. 
However, in many real workloads, the access rates for individual objects are not uniform. This fact is well-known, and is modelled in several industrial benchmarks, such as YCSB~\cite{cooper2010benchmarking}, or TPC-C~\cite{TPCC}, where the generated access distributions are heavy-tailed, e.g., following a Zipf distribution~\cite{cooper2010benchmarking}. 
While in the sequential case the question of designing data structures which adapt to the access distribution is well-studied, see e.g.~\cite{knuth1997art} and references therein, 
in the concurrent case significantly less is known. 
The intuitive reason for this difficulty is that self-adjusting data structures require non-trivial and frequent pointer manipulations, such as node rotations in a balanced search tree, which can be complex to implement concurrently. 

To date, the CBTree~\cite{CBTree} is the only concurrent data structure which leverages the  skew in the access distribution for faster access. At a high level, the CBTree is a concurrent search tree maintaining internal balance with respect to the access statistics per node. 
Its sequential variant provides order-optimal amortized complexity bounds (static optimality), and empirical results show that it  provides significant performance benefits over a classic non-adaptive concurrent design for skewed workloads.  
At the same time, the CBTree may be seen as fairly complex, due to the difficulty of re-balancing in a concurrent setting, and the paper's experimental validation  suggests that maintaining exact access statistics and balance in a concurrent setting come at some performance cost---thus, the authors propose a limited-concurrency variant, where  rebalancing is delegated to a single thread. 

In this paper, we revisit the topic of distribution-adaptive concurrent data structures, and propose a design called the \emph{splay-list}. 
    At a very high level, the splay-list is very similar to a classic skip-list~\cite{Pugh}: 
    it consists of a sequence of sorted lists, ordered by containment, where the bottom-most list contains all the elements present, and each higher list contains a sub-sample of the elements from the previous list. 
    The crucial distinction is that, in contrast to the original skip-list, where the height of each element is chosen randomly, in the splay-list, the height of each element \emph{adapts} to its access rate: elements that are accessed more often move ``up,'' and will be faster to access, whereas elements which are accessed less often are demoted towards the bottom-most list. 
    Intuitively, this property ensures that popular elements are closer to the ``top'' of the list, and are thus accessed more efficiently. 
    
    This intuition can be made precise: we provide a rebalancing algorithm which ensures that, after $m$ operations, the amortized search and delete time for an item $x$ in a sequential splay-list is $\mathcal{O}\left(\log\frac{m}{f(x)}\right)$ where $f(x)$ is the number of previous searches for $x$, whereas insertion takes amortized $
    \mathcal{O}(\log m)$ time. This asymptotically matches the guarantees of the CBTree~\cite{CBTree}, and implies static optimality.  
    Since maintaining exact access statistics for each object can hurt performance---as every search has to write---we introduce and present guarantees for variants of the data structure which only maintains \emph{approximate} access counts. 
    If rebalancing is only performed with probability $1 / c$---meaning that only this fraction of readers will have to write---then we show that the expected amortized cost of a contains operation becomes $\mathcal{O}\left(c \log \frac{m}{f(x)} \right)$. Since $c$ is a constant, this trade-off can be beneficial. 
    
    From the perspective of concurrent access, an advantage of the splay-list is that it can be easily implemented on top of existing skip-list designs~\cite{HSBook}: 
    the pointer changes for promotion and demotion of nodes are operationally a subset of skip-list insertion and deletion operations~\cite{Fraser}. 
    At the same time, our design does come with some limitations: (1) since it is based on a skip-list backbone, the splay-list may have higher memory cost and path length relative to a tree; (2) as discussed above, approximate access counts are necessary for good performance, but come at an increase in amortized expected cost, which we believe to be inherent; (3) for simplicity, our update operations are lock-based (although this limitation could be removed). 
    
    We implement the splay-list in C++ and compare it with the CBTree and a regular skip-list on uniform and skewed workloads, and for different update rates. 
    Overall results show that the \listname{} can indeed leverage workload skew for higher performance, and that it can scale when access counts are approximate.   
    By comparison, the CBTree also scales well for moderately skewed workloads and low update rates, in which case it outperforms the \listname{}. However, it has relatively lower performance for moderate or high update rates. We recall that the original CBTree paper proposes a practical implementation with limited concurrency, in which all rebalancing is performed by a single thread. 

Overall, the results suggest a trade-off between the performance of the two data structures and the workload characteristics, both in terms of access distribution and access types. 
The fact that the \listname{} can outperform the CBTree in some practical scenarios may appear surprising, given that the \listname{}  leads to longer access paths on average due to its skip-list backbone. 
However, our design benefits from allowing additional concurrency, and the caching mechanism serves to hide some of the additional access costs.

    \myparagraph{Related Work.} 
    The literature on \emph{sequential} self-adjusting  data structures is well-established, and extremely vast. 
    We therefore do not attempt to cover it in detail, and instead point the reader to classic texts, e.g.~\cite{knuth1997art, sleator1985self} for details. 
    Focusing on self-adjusting skip-lists, we note that statically-optimal \emph{deterministic} skip-list-like data structures can be derived from the $k$-forest structure of Martel~\cite{martel1991self}, or from the working set structure of Iacono~\cite{iacono2001alternatives}. 
    Ciriani et al.~\cite{ciriani2002static} provide a similar randomized approach for constructing a self-adjusting skip-list for string dictionary operations in the external memory model. 
    Bagchi et al.~\cite{bagchi2005biased} introduced a general \emph{biased skip-list} data structure, which maintains balance w.r.t. node height when nodes can have arbitrary weight, while Bose et al.~\cite{bose2008dynamic} built on biased skip-lists  to obtain a \emph{dynamically-optimal} skip-list data structure. 
    
    Relative to our work, we note that, naturally, the above theoretical references provide stronger guarantees relative to the \listname{} in the sequential setting.
    At the same time, they are quite complex, and would not extend efficiently to a concurrent setting. 
    Two practical additions that our design brings relative to this prior work is that we are the first to provide bounds even when the access count values are \emph{approximate} (Section~\ref{sec:relaxation-theory}), and that our concurrent design allows the \listname{} adjustment to occur in a single pass (Section~\ref{sec:practical}). 
    Reference~\cite{CBTree} posed the existence of an efficient self-balancing skip-list variant as an open question---we answer this question here, in the affirmative. 
    
    The \listname{} ensures similar complexity guarantees as the CBTree~\cite{CBTree}, although its structure is different. 
    Both references provide complexity guarantees under \emph{sequential} access. 
    In addition, we provide complexity guarantees in the case where the access counts are maintained via \emph{approximate} counters, in which case the CBTree is not known to provide guarantees.  
    One obvious difference relative to our work is that we are investigating a skip-list-based design.
    This allows for more concurrency: the proposed practical implementation in~\cite{CBTree} assumes that adjustments are performed only by a dedicated thread, whereas \listname{} updates can be performed by any thread. At the same time, our design shares some of the limitations of skip-list-based data structures, as discussed above. 
    
    There has been a significant amount of work on efficient concurrent ordered maps, see e.g.~\cite{brown2017techniques, 215945} for an overview of recent work. 
    However, to our knowledge, the CBTree remained the only non-trivial self-adjusting concurrent data structure.

\section{The Sequential \Listname{}}

The \listname{} design builds on the classic skip-list by Pugh~\cite{Pugh}. In the following, we will only briefly overview the skip-list structure, and focus on the main technical differences. We refer the reader to~\cite{HSBook} for a more in-depth treatment of concurrent  skip-lists. 

\myparagraph{Preliminaries.} 
Similar to skip-lists, the \listname{} maintains a set of sorted lists, starting from the bottom list,  which contains all the objects present in the data structure. Without loss of generality, we assume that each object consists of a key-value pair. 
We thus use the terms \emph{object} and \emph{key} interchangeably.  
It is useful to view these lists as stacked on top of each other; a list's index (starting from the bottom one, indexed at $0$) is also called its \emph{height}. 
The lists are also ordered by containment, as a higher-index list contains a subset of the objects present in a lower-index list. The higher-index lists are also called \emph{sub-lists}.
The bottom list, indexed at $0$, contains all the objects present in the data structure at a given point in time. Unlike skip-lists, where the choice of which objects should be present in each sub-list is random, a \listname's structure is adjusted according to the access distribution across keys/objects. 

The following definitions make it easier to understand how the operations are handled in \listnameplural. The \emph{height of the \listname} is the number of its sub-lists. The \emph{height of an object} is the height of the highest sub-list containing it. Typically, we do not distinguish between the object and its key. 
The height of a key $u$ is the height of a corresponding object $h_u$. 
Key $u$ is the \emph{parent of key $v$ at height $h$} if $u$ is the largest key whose value is smaller than or equal to $v$, and whose  height is at least $h$. That is, $u$ is the last key at height $h$ in the traversal path to reach $v$. 
Critically, note that, if the height of a key $v$ is at least $h$, then $v$ is its own parent at height $h$; otherwise, its parent is some node $v \neq u$.
In addition, we call the set of objects for which $u$ is the parent at height $h$, its \emph{$h$-children} or the \emph{subtree of $u$ at height $h$, denoted by $C^h_u$}.

Our data structure supports three standard methods: \texttt{contains}, \texttt{insert} and \texttt{delete}.
We say that a \texttt{contains} operation is \emph{successful} (returns \emph{true})
if the requested key is found in the data structure and was not marked as deleted; otherwise, the operation is \emph{unsuccessful}.
An \texttt{Insert} operation is \emph{successful} (returns \emph{true}) if the requested key was not present upon insertion; otherwise, it is \emph{unsuccessful}.
A \texttt{Delete} operation is \emph{successful} (returns \emph{true}) if the requested key is found and was not marked as deleted, otherwise, the operation is \emph{unsuccessful}.
As suggested, in our implementation the \texttt{delete} implementation does not always unlink the object from the lists--instead, it may just mark it as deleted.

For every key $u$, we maintain a counter $hits_u$, which counts the number of \texttt{contains($u$)}, \texttt{insert($u$)}, and \texttt{delete($u$)} operations which \emph{visit the object}.
In particular, \emph{successful} \texttt{contains($u$)}, \texttt{insert($u$)}, and \texttt{delete($u$)} operations increment $hits_u$ 
Moreover, unsuccessful operations can also increment $hits_u$ if the element is physically present in the data structure, even though logically deleted, upon the operation. 
In this case, the marked element is still visited by the corresponding operation.  
(We will re-discuss this notion in the later sections, but the simple intuition here is that we cannot store access counts for elements which are not physically present in the data structure, and therefore ignore their access counts.) 
We will refer to operations that visits an object with the corresponding key simply as \emph{hit-operations}.


For any set of keys $S$, we define a function $hits(S)$ to be the sum of the number of hits-operations performed to the keys in $S$.
As usual, sentinel \emph{head} and \emph{tail} nodes are added to all sub-lists. 
The height of a sentinel node height is equal to the height of the \listname{} itself, and exceeds the height of all other nodes by at least $1$. By convention, $hits_{head} = hits_{tail} = 1$.

\subsection{The \texttt{contains} Operation} 

\myparagraph{Overview.} The contains operation consists of two phases: the search phase and the balancing phase.
The search phase is exactly as in skip-list: starting from the head of the top-most list, we traverse the current list until we find the last object with key lower than or equal to the search key. If this object's key is not equal to the search key, the search continues from the same object in the lower list. Otherwise, the search operation completes. The process is repeated until either the key is found or the algorithm attempts to descend from the bottom list, in which case the key is not present.

If the operation finds its target object, its \emph{hits} counter is incremented and the balancing phase starts: its goal is to update the \listname's structure to better fit the access distribution, by traversing the search path backwards and checking two conditions, which we call the \emph{ascent} and \emph{descent} conditions.

We now overview these conditions. For the descent condition, consider two neighbouring nodes at height $h$, corresponding to two keys $v < u$. Assume that both $v$ and $u$ are on level $h$, and consider their respective subtrees $C^h_v$ and $C^h_u$. 
Assume further that the number of hits to objects in their subtrees ($hits(C^h_v \cup C^h_u)$) became smaller than a given threshold, which we deem appropriate for the nodes to be at height $h$. (This threshold is updated as more and more operations are performed.)
To fix this imbalance, we can ``merge'' these two subtrees, by descending the right neighbour, $u$, below $v$, thus creating a new subtree of higher overall hit count. 
Similarly, for the ascent condition, we check whether an object's subtree has \emph{higher} hit count than a threshold, in which case we increase its height by one. 

Now, we describe the conditions more formally.
Assume that the total number of hit-operations to all objects, including those marked for deletion, appearing in \listname{} is $m$, and that the current height of the \listname{} is equal to $k+1$. Thus, there are $k$ sub-lists, and the sentinel sub-list containing exclusively $head$ and $tail$. Excluding the head, for each object $u$ on a backward path, the following conditions are checked in order.

\myparagraph{The Descent Condition.} Since $u$ is not the head, there must exist an object $v$ which precedes it in the forward traversal order, such that $v$ has height  $\geq h_u$. 
If $$hits(C^{h_u}_u) + hits(C^{h_u}_v) \leq \frac{m}{2^{k - h_u}},$$ then the object $u$ is demoted from height $h_u$, by simply being removed from the sub-list at height $h_u$. The object stays a member of the sub-list at height $h_u - 1$ and $h_u$ is decremented.  The backward traversal is then continued at $v$. 

\myparagraph{The Ascent Condition.} Let $w$ be the first successor of $u$ in the list at height $h_u$, such that $w$ has height \emph{strictly greater than} $h_u$. Denote the set of objects with keys in the interval $[u, w)$ with height equal to $h_u$ by $S_u$. If the number of hits $m$ is greater than zero and the following inequality holds: $$ \sum\limits_{x \in S_u} hits(C^{h_u}_x) > \frac{m}{2^{k - h_u - 1}},$$ then $u$ is promoted and inserted into the sub-list at height $h_u + 1$.
The backward traversal is then continued from $u$, which is now in the higher-index sub-list. 
The rest of the path at height $h_u$ is skipped. 
Note that the object $u$ is again checked against the ascent condition at height $h_u + 1$, so it may be promoted again. 
Also note that the calculated sum is just an interval sum, which can be maintained efficiently, as we show later.

\myparagraph{\Listname{} Initialization and Expansion.} 
Initially, the \listname{} is empty and has only one level with two nodes, head and tail. Suppose that the total number of hits to objects in \listname{} is $m$. The lowest level on which the object can be depends on how low the element can be demoted. Suppose that the current height of the list is $k + 1$. Consider any object at the lowest level $0$: in the descent condition we compare $hits(C^0_u) + hits(C^0_v)$ against $\frac{m}{2^{k}}$. While $m$ is less than $2^{k + 1}$, the object cannot satisfy this condition since $C^{h_u}_v \geq hits_v \geq 1$, but when $m$ becomes larger than this threshold, it could. Thus, we have to increase the height of \listname{} and add a new list to allow such an object to be demoted. 
By that, the height of the \listname{} is always $\log m$.
This process is referred to as \emph{\listname{} expansion}. 
Notice that this procedure could eventually lead to a skip-list of unbounded height.
However, this height does not exceed $64$, since this would mean that we performed at least $2^{64}$ successful operations which is unrealistic. 
We discuss ways to make this procedure more practical, i.e., lazily increase the height of an object only on its traversal, in Section~\ref{sec:practical}.

\myparagraph{The Backward Pass.} Now, we return to the description of the \texttt{contains} function. The first phase is the forward pass, which is simply the standard search algorithm which stores the traversal path. If the key is not found, then we stop. Otherwise, suppose that we found an object $t$. We have to restructure the \listname{} by applying ascent and descent conditions. Note, that the only objects that are affected and can change their height lie on the stored path. For that, in each object $u$ we store the total hits to the object itself, $hits_u$, as well as the total number of hits into the ``subtree'' of each height excluding $u$, i.e., for all $h$ we maintain $hits^h_u = hits(C^h_u \setminus \{u\})$. We denote the hits to the object $u$ as $sh_u$.

\noindent Thus, when traversing the path backwards and we check the following:
\begin{enumerate}
\item If the object $u \ne t$ is a parent of $t$ on some level $h$, we then increase its $hits^h_u$ counter. Note that $h \leq h_u$.
\item Check the descent condition for $v$ and $u$ as $sh_v + hits^{h_u}_v + sh_u + hits^{h_u}_u \leq \frac{m}{2^{k - h_u}}$. If this is satisfied, demote $u$ and increment $hits^{h_u}_v$ by $sh_u + hits^{h_u}_u$. Continue  on the path.
\item Check the ascent condition for $u$ by comparing $\sum_{w \in S_u} sh_w + hits^{h_u}_w$ with $\frac{m}{2^{k - h_u - 1}}$. If this is satisfied, add $u$ to the sub-list $h_u + 1$, set $hits^{h_u + 1}_u$ to the calculated sum minus $sh_u$ and decrease $hits^{h_u + 1}_v$ by the calculated sum, where $h$ is a parent of $u$ at height $h_u + 1$. We then continue with the sub-list on level $h_u + 1$. Below, we describe how to maintain this sum in constant time.
\end{enumerate}

\myparagraph{The partial sums trick.} Suppose that $p(u)$ is the parent of $u$ on level $h_u + 1$. During the forward pass, we compute the sum of $hits(C^{h_u}_x) = sh_x + hits^{h_u}_x$ over all objects $x$ which lie on the traversal path between $p(u)$ (including it) and $u$ (not including it). Denote this sum by  $P_u$. Thus, to check the ascent condition on the backward pass, we simply have to compare $\sum\limits_{x \in S_u} sh_u + hits(C^{h_u}_x) = sh_{p(u)} + hits^{h_u + 1}_{p(u)} - P_u$ against $\frac{m}{2^{k - h_u - 1}}$. Observe that the partial sums $hits(S_u)$ can be increased only by one after each operation. Thus, the only object on level $h$ that can be promoted is the leftmost object on this level. For the first object $u$, $S_u$ can be calculated as $hits^{h_u + 1}_{p(u)} - hits^{h_u}_{p(u)}$. In addition, after the promotion of $u$, only $u$ and $p(u)$ have their $hits^{h_u + 1}$ counters changed. Moreover, there is no need to skip the objects to the left of the promoted object, as suggested by the ascent condition, since there cannot be any such objects.

\myparagraph{Example.}
To illustrate, consider the \listname{} provided on Figure~\ref{fig:example:start}. It contains keys $1, \ldots, 6$ with values $m = 10$ and $k = \lfloor \log m \rfloor = 3$. We can instantiate the sets described above as follows: $C^1_3 = \{3, 4, 5\}$, $C^1_2 = \{2\}$, $C^1_{head} = \{head, 1\}$ and $C^2_{head} = \{head, 1, 2, \ldots, 5\}$. At the same time, $S_4 = \{4, 5\}$, $S_3 = \{3\}$ and $S_2 = \{2, 3\}$.
In the Figure, the cell of $u$ at height $h > 0$ contains $hits^h_u$, while the cell at height $0$ contains $sh_u$. For example, $sh_3 = 1$ and $hits^1_3 = sh_4 + sh_5 = 2$, $sh_2 = 1$ and $hits^1_2 = 0$, $sh_1 = 1$ and $hits^2_{head} = 5$.

Assume we execute \texttt{contains($5$)}. 
On the forward path, we find $5$ and the path to it is $2 \rightarrow 3 \rightarrow 4 \rightarrow 5$. We increment $m$, $sh_5$, $hits^1_3$ and $hits^2_{head}$ by one. Now, we have to adjust our \listname{} on the backward path. We start with $5$: we check the descent condition by comparing $hits(C^0_4) + hits(C^0_5) = 3$ with $\frac{m}{2^{k - 0}} = \frac{11}{8}$ and the ascent condition by comparing $hits(S_5) = 2$ with $\frac{m}{2^{k - 0 - 1}} = \frac{11}{4}$. Obviously, neither condition is satisfied. We continue with $4$: the descent condition by comparing $hits(C^0_3) + hits(C^0_4) = 2$ with $\frac{11}{8}$ and the ascent condition by comparing $hits(S_4) = 3$ with $\frac{11}{4}$~--- the ascent condition is satisfied and we promote object $4$ to height $1$ and change the counter $hits^1_3$ to $2$. For $3$, we compared $hits(C^1_2) + hits(C^1_3) = 2$ with $\frac{11}{4}$ and $hits(S_3) = 4$ with $\frac{11}{2}$~--- the descent condition is satisfied and we demote object $3$ to height $0$ and change the counter $hits^1_2$ to $1$. Finally, for $2$ we compared $hits(C^1_1) + hits(C^1_2) = 4$ with $\frac{11}{4}$ and $hits(S_2) = 5$ with $\frac{11}{2}$~--- none of the conditions are satisfied. As a result we get the \listname{} shown on Figure~\ref{fig:example:result}.

\begin{figure*}
\centering
\begin{subfigure}{.5\linewidth}
\includegraphics[width=\linewidth]{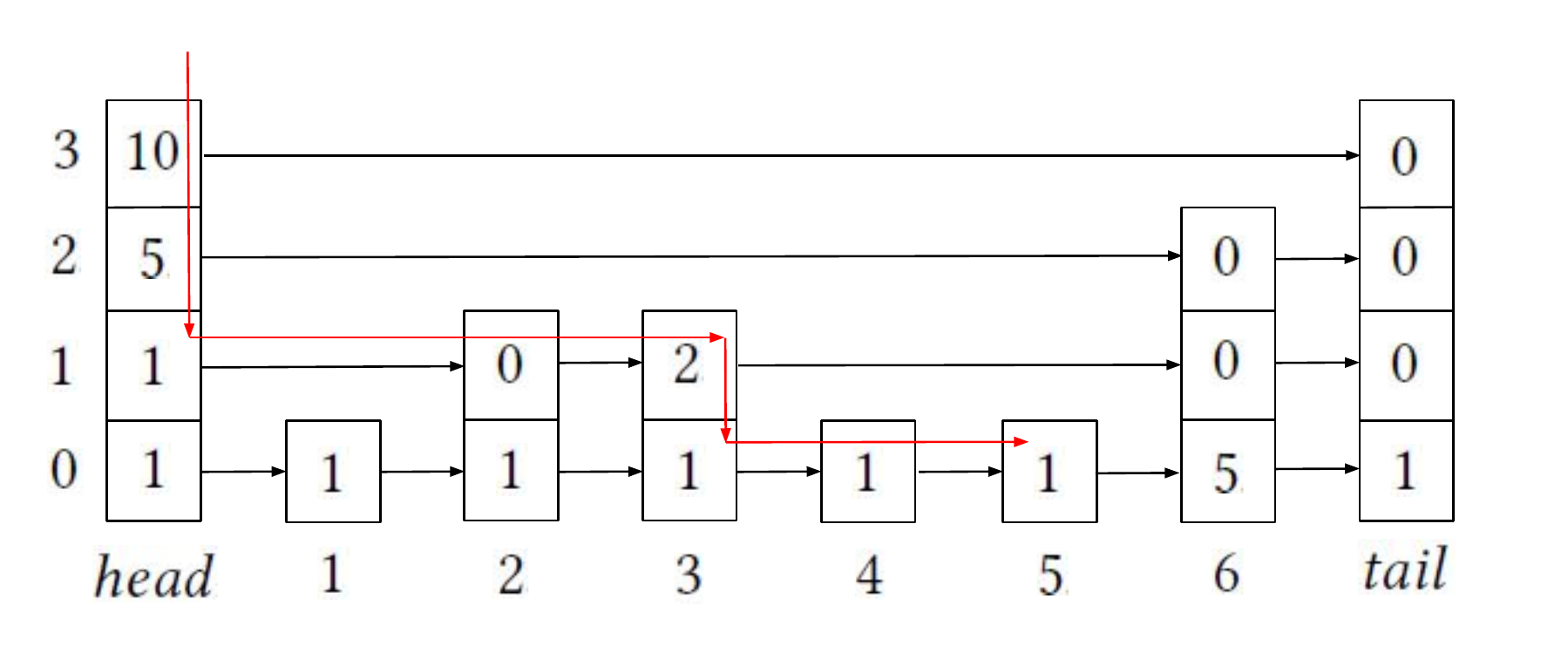}
\caption{Before \texttt{contains($5$)}}
\label{fig:example:start}
\end{subfigure}%
\begin{subfigure}{.5\linewidth}
\includegraphics[width=\linewidth]{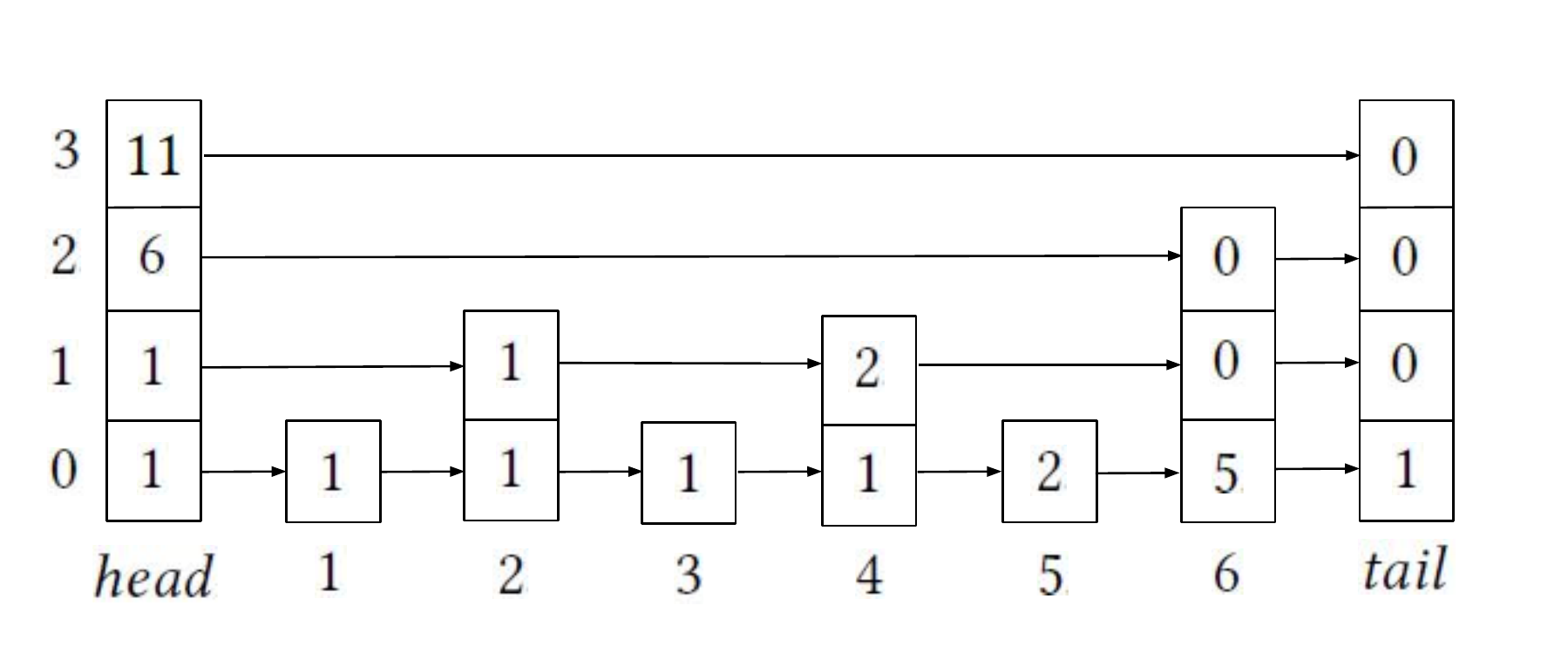}
\caption{After \texttt{contains($5$)}}
\label{fig:example:result}
\end{subfigure}
\caption{Example of \listname{}}
\end{figure*}

\subsection{\texttt{Insert} and \texttt{Delete} operations}

\myparagraph{Insertion.} Inserting a key $u$ is done by first finding the object with the largest key  lower than or equal to $u$. In case an object with the key is found, but is marked as logically deleted, the insertion unmarks the object, increases its hits counter and completes successfully. Otherwise, $u$ is inserted on the lowest level after the found object. 
This item has hits count set to $1$. In both cases, the structure has to be re-balanced on the backward pass as in \texttt{contains} operation. Unlike the skip-list, \listnameplural{} always physically inserts into the lowest-level list.

\myparagraph{Deletion.}
This operation needs additional care. The operation first searches for an object with the specified key. If the object is found, then the operation logically deletes it by marking it as \texttt{deleted}, increases the hits counter and performs the backward pass. Otherwise, the operation completes.

Notice that we maintain the total number of hits on currently logically deleted objects. When it becomes at least half of $m$, the total number of hits to all objects, we initialize a new structure, and move all non-deleted objects with corresponding hits to it.

\myparagraph{Efficient Rebuild.} The only question left is how to build a new structure efficiently enough to amortize the performed delete operations.
Suppose that we are given a sorted list of $n$ keys $k_1, \ldots, k_n$ with the number of hit-operations on them $h_1, \ldots, h_n$, where their sum is equal to $M$. We propose an algorithm that builds a \listname{} such that no node satisfies the ascent and descent conditions, using $O(M)$ time and $O(n \log M)$ memory.

The idea behind the algorithm is the following. 
We provide a recursive procedure that takes the contiguous segment of keys $k_l, \ldots, k_r$ with the total number of accesses $H = h_l + \ldots + h_r$. The procedure finds $p$ such that $2^{p - 1} \leq H < 2^p$. Then, it finds a key $k_s$ such that $h_l + \ldots + h_{s - 1}$ is less than or equal to $\frac{H}{2}$ and $h_{s + 1} + \ldots + h_r$ is less than $\frac{H}{2}$. We create a node for the key $k_s$ with the height $p$, and recursively call the procedure on segments $k_l, \ldots, k_{s - 1}$ and $k_{s + 1}, \ldots, k_r$. There exists a straightforward implementation which finds the split point $s$ in $O(r - l)$, i.e., linear time. The resulting algorithm works in $O(n \log M)$ time and takes $O(n \log M)$ memory: the depth of the recursion is $\log M$ and on each level we spend $O(n)$ steps.

However, the described algorithm is not efficient if $M$ is less than $n \log M$. 
To achieve $O(M)$ complexity, we would like to answer the query to find the split point $s$ in $O(1)$ time. 
For that, we prepare a special array $T$ which contains in sorted order $h_1$ times key $k_1$, $h_2$ times key $k_2$, $\ldots$, $h_n$ times key $k_n$. 
To get the required $s$, at first, we take a subarray of $T$ that corresponds to the segment $[l, r]$ under the process, i.e., $h_l$ times key $k_l$, $\ldots$, $h_r$ times key $k_r$. Then, we take the key $k_i$ that is located in the middle cell $\lceil \frac{h_l + \ldots + h_r}{2} \rceil$ of the chosen subarray. This $i$ is our required $s$.
Let us calculate the total time spent: the depth of the recursion is $\log M$; there is one element on the topmost level which we insert in $\log M$ lists, there are at most two elements on the next to topmost level which we insert in $\log M - 1$ lists, and etc., there are at most $2^i$ elements on the $i$-th level from the top which we insert in $\log M - i$ lists. The total sum is clearly $O(M)$.

Thus, the final algorithm is: if $M$ is larger than $n \log M$, then we execute the first algorithm, otherwise, we execute the second algorithm. The overall construction works in $O(M)$ time and uses $O(n \log M)$ memory.

\section{ Sequential \Listname{} Analysis}

\myparagraph{Properties.} We begin by stating some invariants and general propertties of the \listname{}. 


\begin{lemma}
\label{lemma:no-ascent}
After each operation, no object can satisfy the ascent condition.
\end{lemma}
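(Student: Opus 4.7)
I would argue the statement by induction on the number of operations applied to the \listname{}. The base case is trivial: immediately after initialization only the sentinels are present, so there is no non-sentinel object and the ascent condition is vacuous. For the inductive step, assume that after $n-1$ operations no object $u$ satisfies $Q_u := \sum_{x \in S_u} hits(C^{h_u}_x) > m/2^{k-h_u-1}$, and consider the $n$-th operation, consisting of the forward pass, the increment of a single counter $hits_t$ at the target key $t$, a possible \listname{} expansion, and the backward pass.

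The first step is a localization argument. Because only $hits_t$ changes (and by exactly one), the quantity $Q_u$ can change only if the parent of $t$ at level $h_u$ belongs to $S_u$; a straightforward case analysis on the position of $t$ relative to the cluster of $u$ at height $h_u$ shows that this happens only when $u$ is the leftmost object at height $h_u$ of the unique maximal cluster at height $h_u$ whose interval contains $t$. For every other cluster at every height, $Q$-values and cluster boundaries are unchanged and the threshold can only grow (when $m$ grows while $k$ is fixed), so the inductive hypothesis transfers unchanged.

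The second step is to verify that the backward pass visits exactly the objects whose $Q$ grew. Indeed, at each level at which the forward traversal descends, the first object processed in reverse order is precisely the leftmost of the affected cluster, and the partial-sums trick lets the pass evaluate $Q_u$ in constant time. If the ascent condition holds at this leftmost object $u$, the pass promotes $u$ to $h_u+1$ and recurses; otherwise nothing is done. After a promotion, the right sub-cluster at height $h_u$ has a new leftmost object $u^+$ whose $Q$-value satisfies $Q_{u^+} = Q_u - hits(C^{h_u}_u) \le Q_u - 1$, since $hits(C^{h_u}_u) \ge hits_u \ge 1$. Because $Q_u$ grew by at most one during the operation and by inductive hypothesis was previously at most the previous threshold, a short calculation gives $Q_{u^+} \le m/2^{k-h_u-1}$; together with $Q_{u'} \le Q_{u^+}$ for every other object $u'$ of the split sub-cluster (since $S_{u'} \subseteq S_{u^+}$), this closes level $h_u$. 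At the new level $h_u+1$ the same analysis is applied recursively to the newly promoted $u$, and the recursion terminates because the \listname{} has bounded height.

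The main obstacle I expect to wrestle with is the interaction with \emph{\listname{} expansion}: when $m$ crosses a power of two, $k$ increases by one and each ascent threshold $m/2^{k-h_u-1}$ is roughly halved, so in principle objects untouched by the current operation could suddenly violate the ascent condition. The argument above has to be combined either with a structural claim that pre-expansion $Q$-values are at most the \emph{new} (halved) thresholds, or with a top-down sweep executed as part of expansion that re-promotes any offending leftmost cluster object. I would handle this case separately by induction on height, using the fact that at the top non-sentinel level the new threshold equals $m$, which bounds every $Q$ trivially, and propagating downwards. Once expansion is handled, all other cases reduce to the localized argument of the previous paragraph and the induction closes.
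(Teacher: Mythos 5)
Your overall strategy (induction on operations, localize which ascent potentials can change, argue the backward pass fixes exactly those) matches the paper's, but there are two genuine gaps. First, your localization claim in step~1 is false: since $S_u$ contains \emph{every} object of height $h_u$ from $u$ up to the next taller node, the parent of $t$ at height $h$ lies in $S_{u_i}$ for \emph{all} objects $u_1,\dots,u_\ell$ of that height on the traversal path, not only the leftmost one, so all of their potentials $Q_{u_i}$ increase by one. The paper instead proves that among these, only $u_1$ can \emph{newly} satisfy the condition, via a strict-monotonicity argument you do not supply: $sh_{u_i}\ge 1$ forces $Q_{u_1}>Q_{u_2}>\dots>Q_{u_\ell}$, so if some $u_i$ with $i\ge 2$ crossed the threshold after a $+1$ increment, then $u_1$ would already have satisfied the ascent condition \emph{before} the operation, contradicting the induction hypothesis. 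Without this step, "was at most the old threshold and grew by one" does not rule out a violation, because the threshold $m/2^{k-h-1}$ grows by strictly less than one when $m$ increments.

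Second, your proof never treats \emph{demotions}, which the backward pass also performs and which change ascent potentials of objects you never revisit: when $u$ is demoted from height $h$, the clusters at height $h-1$ beneath $v$ and $u$ merge, and the potentials of the objects at height $h-1$ between $v$ and $u$ can jump by far more than one. The paper closes this by observing that each such potential is bounded by $T=hits(C^h_v)+hits(C^h_u)\le m/2^{k-h}$, which is exactly the ascent threshold at height $h-1$, so no new violation is created. Finally, your "main obstacle" about expansion is largely a non-issue: expansion adds a new list at the \emph{bottom}, so the relative height $k-h_u$ of every existing object is preserved and the thresholds $m/2^{k-h_u-1}$ do not halve; the top-down re-promotion sweep you propose is not part of the algorithm and is not needed.
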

\begin{proof}
Note that we only consider the hit-operations, i.e., the operations that change $hits$ counters, because other operations do not affect any conditions.
We will proceed by induction on the total number $m$ of hit-operations on the objects of \listname{}.

For the base case $m = 0$, the \listname{} is empty and the hypothesis trivially holds.
For the induction step, we assume that the hypothesis holds before the start of the $m$-th operation, and we verify that it holds after the operation completes.

First, recall that, for a fixed object $u$, the set $S_u$ is defined to include all objects of the same height between $u$ and the successor of $u$ with height \emph{greater} than $h_u$.
Specifically, we name the sum $\sum\limits_{x \in S_u} hits(C^h_x)$ in the ascent condition as the object $u$'s \textbf{ascent potential}. Note that after the forward pass and the increment of $sh_u$ and $hits^h_v$ counters where $v$ is a parent of $u$ on height $h$, only the objects on the path have their ascent potential increased by one and, thus, only they can satisfy the ascent condition.

Now, consider the restructuring done on the backward pass. If the object $u$ satisfies the descent condition, i.e., $v$ precedes $u$ and $T = hits(C^{h_u}_v) + hits(C^{h_u}_u) \leq \frac{m}{2^{k - h}}$, we have to demote it. After the descent, the ascent potential of the objects between $v$ and $u$ on the lower level $h_u - 1$ have changed. However, these potentials cannot exceed $T$, meaning that these objects cannot satisfy the ascent condition.

Consider the backward pass, and focus on the set of objects at height $h$. We claim that only the leftmost object at that height can be promoted, i.e., its preceding object has a height greater than $h$. This statement is proven by induction on the backward path. Suppose that we have $\ell$ objects with height $h$ on the path, which we denote by $u_1, u_2, \ldots, u_{\ell}$. By induction, we know that none of the objects on the path with lower height can ascend higher than $h$: these objects appear to the right of $u_1$. We know that each object was accessed at least once, $sh_{u_i} \geq 1$, and, thus, we can guarantee that $hits(S_{u_1}) > hits(S_{u_2}) > \ldots > hits(S_{u_{\ell}})$. Since the ascent potentials $hits(S_{u_i})$ are increased only by one per operation, the first and the only object that can satisfy the ascent condition is $u_1$, i.e., the leftmost object with the height $h$. If it satisfies the condition, we promote it. Consider the predecessor of $u_1$ on the forward path: the object $v$ with height $h_v > h$. Object $u_1$ can be promoted to height $h_v$, but not higher, since the ascent potential of the objects on the path with height $h_v$ does not change after the promotion of $u$, and only the leftmost object on that level can ascend. However, note that $hits^{h_v}_v$ can decrease and, thus, it can satisfy the descent condition, while $u_1$ cannot since $hits^h_{u_1}$ was equal to $hits(S_{u_1})$ before the promotion and it satisfied the ascent condition.

Because the only objects that can satisfy the ascent condition lie on the path, and we promoted necessary objects during the backward pass, no object may satisfy the ascent condition at the end of the traversal. That is exactly what we set out to prove.\end{proof}

\begin{lemma}
\label{lemma:max-sublists}
Given a hit-operation with argument $u$, the number of sub-lists visited during the forward pass is at most $3 + \log\frac{m}{sh_u}$.
\end{lemma}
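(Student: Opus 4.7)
The plan is to leverage Lemma~\ref{lemma:no-ascent} applied to the target object $u$ itself. Since the forward pass only reads the structure, the state it traverses is precisely the state left behind by the previous completed hit-operation; by the lemma, no object at that moment satisfies the ascent condition, and in particular this is true for $u$ at its own height $h_u$. Writing out the failed ascent condition at $u$ gives
$$\sum_{x \in S_u} hits(C^{h_u}_x) \;\le\; \frac{m}{2^{\,k-h_u-1}}.$$

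The key observation is then that $u$ itself contributes to this sum. Because $u$'s own key lies in the interval $[u,w)$ and $u$ has height exactly $h_u$, we have $u \in S_u$; moreover, by the definition of parent, $u$ is its own parent at height $h_u$, so $u \in C^{h_u}_u$. Hence
$$sh_u \;\le\; hits(C^{h_u}_u) \;\le\; \sum_{x \in S_u} hits(C^{h_u}_x) \;\le\; \frac{m}{2^{\,k-h_u-1}},$$
which after taking logarithms rearranges to $k - h_u \;\le\; 1 + \log\frac{m}{sh_u}$.

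To finish, I would count the sub-lists explicitly: the forward pass starts at the sentinel sub-list at the top of the \listname{} (height $k$) and descends one sub-list at a time, stopping at height $h_u$ where $u$ is first encountered. This visits exactly $k - h_u + 1$ sub-lists, and substituting the bound from the previous step gives $k - h_u + 1 \le 2 + \log(m/sh_u) \le 3 + \log(m/sh_u)$, as claimed.

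I do not expect a real obstacle here: once Lemma~\ref{lemma:no-ascent} is in hand, the argument is essentially one inequality. The only care needed is (i) to justify that the ascent condition can be specialised to $u$ (i.e.\ that $u \in S_u$ and $u \in C^{h_u}_u$), and (ii) to absorb the off-by-one between $k$ and $\log m$ coming from the \listname{}-expansion rule, both of which are handled by the additive slack between $2$ and $3$ in the final line.
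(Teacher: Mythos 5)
Your proposal is correct and follows essentially the same route as the paper: invoke Lemma~\ref{lemma:no-ascent} to conclude that $u$ fails the ascent condition at the start of the forward pass, lower-bound the ascent sum by $sh_u$ (via $u \in S_u$ and $u \in C^{h_u}_u$), and convert $sh_u \le m/2^{k-h_u-1}$ into a bound on the number of levels traversed. The only difference is a one-off discrepancy in counting the sentinel sub-list (the paper counts $(k+1)-h_u+1$ levels, you count $k-h_u+1$), which your extra additive slack absorbs in either case.
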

\begin{proof}
During the forward pass the number of hits does not change; thus, according to Lemma \ref{lemma:no-ascent}, the ascent condition does not hold for $u$. Hence $sh_u \leq \frac{m}{2^{k - h_u - 1}}$. We get that $k - h_u - 1 \leq \log\frac{m}{sh_u}$. Since during the forward pass $(k + 1) - h_u + 1$ sub-lists are visited (notice the sentinel sub-list), the claim follows.
\end{proof}

\begin{lemma}
\label{lemma:forward-4}
In each sub-list, the forward pass visits at most four objects that do not satisfy the descent condition.
\end{lemma}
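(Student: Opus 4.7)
The plan is to combine Lemma~\ref{lemma:no-ascent} with a double-counting argument over the subtree hits of the objects visited at a single sub-list. First I would pin down the structure of the forward pass at sub-list $h$: let $v_1$ be the object from which the operation descended from sub-list $h+1$ (so $h_{v_1} \ge h+1$), and let $v_2, \ldots, v_t$ be the objects visited subsequently at sub-list $h$. Because no object of height $\ge h+1$ lies between $v_1$ and the search key, each $v_i$ for $i \ge 2$ has height exactly $h$, and all of them lie in the interval $(v_1, N)$, where $N$ is the first object of height $\ge h+1$ strictly after $v_1$.

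Next I would invoke Lemma~\ref{lemma:no-ascent} at $v_2$ (at its own height $h$). Since $S_{v_2}$ consists exactly of the height-$h$ objects in $[v_2, N)$ and their $C^h$-subtrees disjointly cover this interval,
\begin{equation*}
\sum_{i=2}^{t} hits(C^h_{v_i}) \;\le\; hits([v_2, N)) \;\le\; \frac{m}{2^{k-h-1}} = 2\cdot\frac{m}{2^{k-h}}.
\end{equation*}

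The main counting step is to split the set of non-descent indices $D \subseteq \{2, \ldots, t\}$ into $\{2\} \cap D$ and $D' := D \setminus \{2\}$. Each $i \in D'$ contributes a strict inequality $hits(C^h_{v_{i-1}}) + hits(C^h_{v_i}) > m/2^{k-h}$ involving only objects in $\{v_2, \ldots, v_t\}$ (since $i \ge 3$). Summing over $D'$ and observing that each $v_j$ with $j \ge 2$ appears at most twice in the sum, the bound from the previous paragraph gives $|D'| \cdot m/2^{k-h} < 4\, m/2^{k-h}$, i.e., $|D'| \le 3$. Together with the at-most-one contribution from $i = 2$, we conclude $|D| \le 4$.

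The main subtlety is exactly why we must separate out the case $i = 2$: the pair involving $v_1$ would introduce the term $hits(C^h_{v_1})$ into the double count, and this subtree sits in $[v_1, v_2)$, outside the range controlled by Lemma~\ref{lemma:no-ascent} at height $h$ (since $v_1$'s height exceeds $h$). Isolating this single boundary pair is what turns the argument into the clean constant bound of four.
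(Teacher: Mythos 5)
Your argument is sound and is, at its core, the same as the paper's: both proofs invoke Lemma~\ref{lemma:no-ascent} at the leftmost height-$h$ object to cap the total subtree hits available in $S_{v_2}$ at $\frac{m}{2^{k-h-1}}$, and then charge each failed descent check at least $\frac{m}{2^{k-h}}$ against that budget. Your isolation of the boundary pair at $i=2$ (whose left endpoint $C^h_{v_1}$ lies outside the interval controlled by $v_2$'s ascent potential) is exactly right, and is implicitly why the paper's pairs start at its $u_3$ rather than $u_2$. The one place you give ground is the double count: summing the overlapping pairs with multiplicity costs a factor of two and yields only $|D'|\le 3$, i.e., up to four failing objects among $v_2,\ldots,v_t$. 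The paper instead selects two \emph{disjoint} failing pairs---$(l(u_3),u_3)$ and $(l(u_5),u_5)$, attached to the second and fourth height-$h$ failures---whose combined weight already exceeds $\frac{m}{2^{k-h-1}}$ and triggers the ascent condition, showing that at most \emph{three} height-$h$ objects can fail. This discrepancy is not purely cosmetic: the paper's constant of four counts the descent point $u_1$ of height $>h$ as one of the visited objects (it appears among the five in the contradiction), so under that convention your bound delivers five rather than four. You can recover the exact constant within your own framework by summing over every other element of $D'=\{i_1<i_2<\cdots\}$: the pairs for $i_1$ and $i_3$ are disjoint because $i_1\le i_3-2$, and both lie inside $S_{v_2}$, so $|D'|\ge 3$ already contradicts Lemma~\ref{lemma:no-ascent}, giving $|D'|\le 2$ and at most three height-$h$ failures plus $v_1$. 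None of this affects the downstream asymptotics in Theorems~\ref{theorem:total-length} and~\ref{thm:main-asymptotics}, which only need some constant, but the lemma as stated claims exactly four.
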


\begin{proof}
Suppose the contrary and that the algorithm visits at least five objects $u_1, u_2, \ldots, u_5$ in order from left to right, that do not satisfy the descent condition in sub-list $h$. The height of the objects $u_2, \ldots, u_5$ is $h$, while the height of $u_1$ might be higher. See Figure~\ref{fig:forward-lemma}.

\InsertBoxR{0}{\begin{minipage}{0.5\linewidth}
    \centering
    \includegraphics[width=\linewidth]{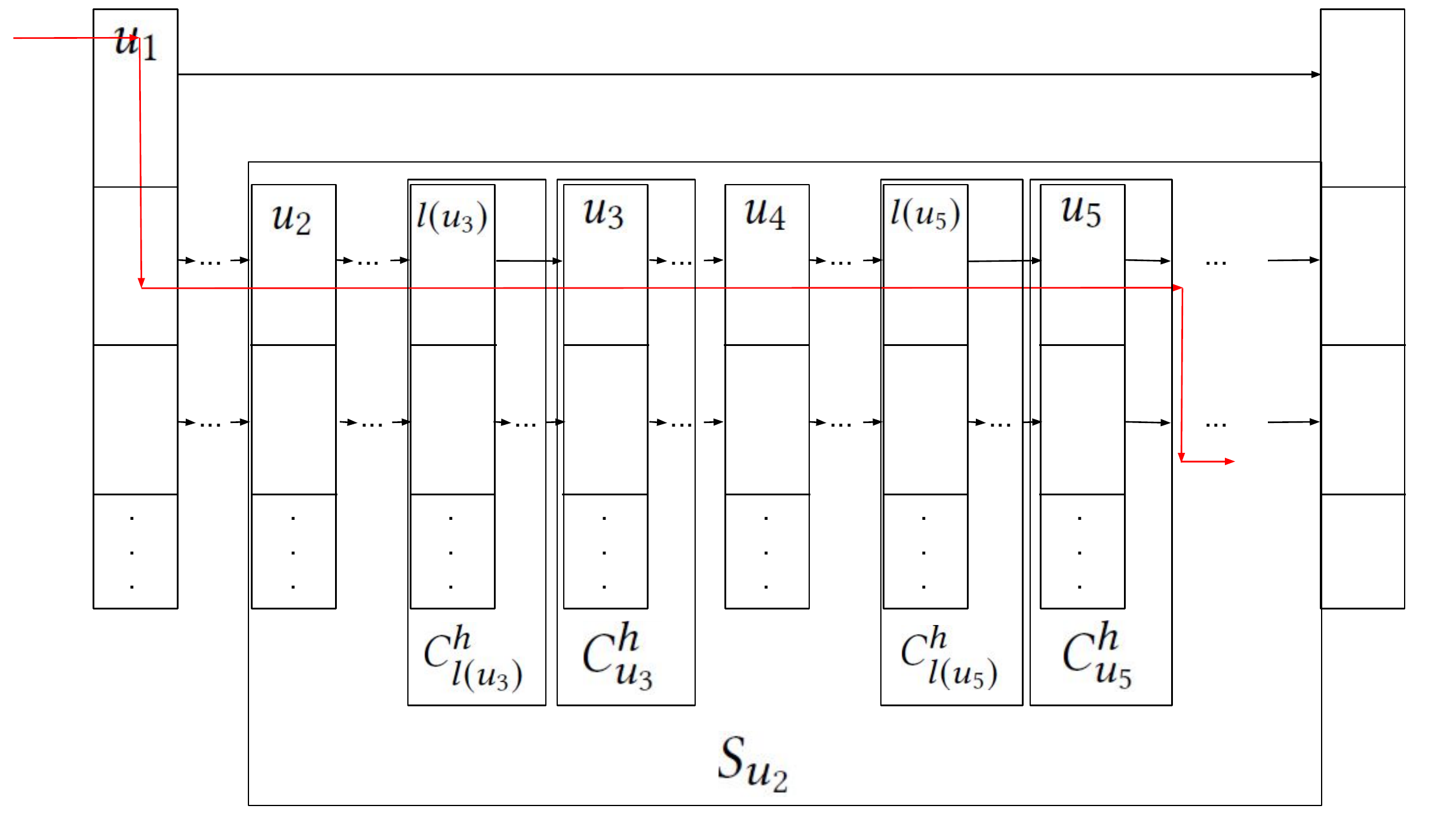}
    \captionof{figure}{Depiction of the proof of Lemma 3}
    \label{fig:forward-lemma}
\end{minipage}%
}[4]

Note that if the descent condition does not hold for an object $u$, the demotion of another object of the same height cannot make the descent condition for $u$ satisfiable. Therefore, since the condition is not met for $u_3$ and $u_5$, the sum $hits(S_{u_2}) \geq (hits(C^h_{l(u_3)}) + hits(C^h_{u_3})) + (hits(C^h_{l(u_5)}) + hits(C^h_{u_5})) > \frac{m}{2^{k - h}} + \frac{m}{2^{k - h}} = \frac{m}{2^{k - h - 1}}$, where $l(u_3)$ and $l(u_5)$ are the predecessors of $u_3$ and $u_5$ on height $h$. Note that it is possible that $l(u_3)$ and $l(u_5)$ would be the same as $u_2$ and $u_4$ respectively. This means that $u_2$ satisfies the ascent condition, which contradicts Lemma \ref{lemma:no-ascent}. 

Note that we considered four objects since $u_1$ is an object of height greater than $h$. 
\end{proof}

Since only the leftmost object can be promoted, the backward path coincides with the forward path. Thus, the following lemma trivially holds.
\begin{lemma}
\label{lemma:backward-4}
During the backward pass, in each sub-list $h$, at most four objects are visited that do not satisfy the descent condition.
\end{lemma}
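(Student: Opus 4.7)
The plan is to leverage Lemma~\ref{lemma:forward-4} directly, by arguing that the backward pass visits essentially the same objects at each sub-list as the forward pass did. Since the bound for the forward pass is four, the bound for the backward pass will then follow.

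The key observation is that the backward traversal never introduces objects outside the forward path. I would establish this through two sub-claims. First, by the reasoning already used inside the proof of Lemma~\ref{lemma:no-ascent}, among the objects of height $h$ that lie on the forward path, only the \emph{leftmost} one can satisfy the ascent condition. Hence whenever an ascent occurs during the backward pass at some object $u$ at height $h$, the rest of the forward path at that level which the algorithm is instructed to skip is empty — so no object on the forward path at height $h$ is ever missed. Second, whenever a descent occurs at $u$ and the backward traversal resumes at its height-$h$ predecessor $v$, that $v$ was itself a node traversed during the forward pass at height $h$, so no new object is added to the traversal at this level either. Combining these two observations shows that the set of objects examined at sub-list $h$ by the backward pass is contained in the set examined at sub-list $h$ by the forward pass (modulo at most one element which may have been promoted out of, or demoted into, level $h$ during the pass).

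Finally, I would verify that the descent status of an object at a given height is essentially the same whether evaluated during the forward or the backward pass. The only changes occurring between the two evaluations are an increment by one to $sh$ of the target key and to the $hits^h$ counter of each ancestor on the traversal path, while the right-hand side threshold $m/2^{k-h}$ grows by $1/2^{k-h} \le 1$. So the difference is negligible in the relevant direction: any object that fails the descent condition in the forward pass still fails it in the backward pass (and vice versa up to at most one boundary object, which can be absorbed into the constant four). Hence applying Lemma~\ref{lemma:forward-4} yields the claim.

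The only potential subtlety — and the reason the author describes the result as trivial rather than immediate — is making sure that promoted and demoted objects during the backward pass are correctly accounted for at the height where they \emph{were} when being checked, as opposed to the height they end up at. With the coincidence of the forward and backward paths established above, this bookkeeping is straightforward, and the lemma follows.
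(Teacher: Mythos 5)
Your first half matches the paper exactly: the backward traversal stays on the forward path because only the leftmost object at each height can satisfy the ascent condition (so the ``skip the rest of the level'' step never skips anything), and a demotion resumes at a predecessor that was itself traversed forward. That is precisely the observation the paper uses.

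Where you diverge is in how you get from the forward-pass bound to the backward-pass bound, and this is where there is a real (if small) gap. You try to \emph{transfer} the conclusion of Lemma~\ref{lemma:forward-4} by arguing that descent statuses barely change between the two passes. Even granting your ``at most one boundary object per level'' claim (which is true, but you do not justify it --- the reason is that at each height exactly one path node, namely the parent of the target at that height, has its $hits^h$ counter incremented, so only that node's left-hand side grows; all other path nodes at that level see only the threshold grow and can flip only in the harmless direction), this argument yields a bound of $4+1=5$ objects per sub-list, not $4$. Saying the extra object ``can be absorbed into the constant four'' is not correct: the lemma asserts four, and five is not four. The slack is harmless for Theorem~\ref{theorem:total-length} (it would just change $8y$ to a larger constant times $y$), but it does not prove the lemma as stated. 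The paper avoids this entirely by \emph{rerunning} the counting argument of Lemma~\ref{lemma:forward-4} directly on the backward-pass state: if five visited objects at height $h$ all fail the descent condition with the \emph{current} counters, then the same summation shows $u_2$ satisfies the ascent condition with the current counters --- the arithmetic $2\cdot\frac{m}{2^{k-h}}=\frac{m}{2^{k-h-1}}$ is unaffected by the increment of $m$, since both thresholds use the same $m$ --- and the backward pass would then promote $u_2$ and continue on the higher sub-list, so the fifth object is never visited at height $h$. That gives exactly four with no boundary cases. I recommend you replace your transfer step with this direct argument.
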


\begin{theorem}
\label{theorem:total-length}
If $d$ descents occur when accessing object $u$, the sum of the lengths of the forward and backward paths is at most $2d + 8y$, where $y = 3 + \log\frac{m}{sh_u}$.
\end{theorem}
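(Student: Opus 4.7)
My plan is to combine Lemma~\ref{lemma:max-sublists} and Lemma~\ref{lemma:backward-4} with the observation stated just before Lemma~\ref{lemma:backward-4} that the backward path coincides with the forward path. The strategy is to bound the common length of the two paths as ``descents plus a bounded overhead per sub-list,'' then multiply by two.

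First, since only the leftmost object at each height can satisfy the ascent condition, no portion of the forward path is skipped by a promotion in the backward pass; hence both passes visit exactly the same nodes and have the same length $L$, so the quantity to bound is $f + b = 2L$. By Lemma~\ref{lemma:max-sublists}, the forward (and therefore backward) pass touches at most $y = 3 + \log\frac{m}{sh_u}$ sub-lists. Let $L_h$ denote the number of objects visited in sub-list $h$ during the backward pass, and let $d_h$ denote the number of descents performed at height $h$, so that $\sum_h d_h = d$.

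Next, apply Lemma~\ref{lemma:backward-4} at each visited sub-list $h$: at most four of the $L_h$ visited objects fail the descent condition, and every other visited object is demoted, contributing to $d_h$. This gives the per-sub-list inequality $L_h \le d_h + 4$. Summing over the at most $y$ visited sub-lists yields $L \le d + 4y$, so $f + b = 2L \le 2d + 8y$, which is exactly the claim.

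The main technical point I expect to have to justify carefully is the equality of forward and backward path lengths. A priori, an ascent during the backward pass could shift the traversal up one sub-list and cause it to bypass a suffix of the forward path at the lower level; however, the earlier observation that only the leftmost object at a given height can be promoted rules this out, since there are no other objects in that lower sub-list to the left of the promoted node that would be skipped. Once this observation is invoked, the rest is just the elementary counting argument above.
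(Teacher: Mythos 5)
Your proof is correct and follows essentially the same route as the paper: both arguments rest on Lemma~\ref{lemma:max-sublists} for the bound of $y$ sub-lists and on Lemmas~\ref{lemma:forward-4}/\ref{lemma:backward-4} for the ``at most four non-descending objects per sub-list'' count, charging the demoted objects to $d$. The only cosmetic difference is that you bound a single pass by $d+4y$ and double it (using the coincidence of the two paths), whereas the paper counts the forward and backward contributions separately; the content is identical.
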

\begin{proof}
Each object satisfying the descent condition is passed over twice, once in the forward and again in the backward pass. According to Lemma \ref{lemma:max-sublists}, there are at most $y$ sub-lists that are visited during either passes. Excluding the descended objects, the total length of the forward path, according to Lemma \ref{lemma:forward-4} is $4y$. Lemma \ref{lemma:backward-4} gives the same result for the backward path. Hence, the total length is $2d + 8y$ which is the desired result.
\end{proof}

\myparagraph{Asymptotic analysis.} 
We can now finally state our main analytic result. 

\begin{theorem}
\label{thm:main-asymptotics}
The hit-operations with argument $u$ take amortized  $O\left(\log\frac{M}{sh_u}\right)$ time, where $M$ is the total number of hits to non-marked objects of the \listname.
At the same time, all other operations take amortized $O(\log M)$ time.
\end{theorem}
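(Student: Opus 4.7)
The raw cost of a hit-operation on target $u$ is $O(\log(M/sh_u) + d)$ by Theorem~\ref{theorem:total-length}, where $d$ is the number of descents performed during the backward pass. For a successful insertion, $sh_u = 1$ at creation time, so the raw cost reads $O(\log M + d)$. The plan is to amortize the $d$ term via a standard potential-function argument.

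I would take $\Phi = c \sum_v h_v$, a constant multiple of the sum of heights over all objects currently in the \listname{}. Each descent decreases one $h_v$ by one, so $\Delta\Phi = -c$ per descent, and picking $c$ larger than the constant hidden in the per-descent work makes the amortized cost of descents non-positive. A list-height expansion changes $k$ but no $h_v$ and leaves $\Phi$ untouched; inserting a brand-new node at height $0$ contributes nothing to $\Phi$ before the balancing phase; and the rebuild triggered by accumulated logical deletions takes $O(M)$ time and can be charged to the $\Omega(M)$ hit-operations that filled the deletion budget.

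The central step is bounding the number of ascents during one operation, since each ascent increases $\Phi$ by $c$. By Lemma~\ref{lemma:max-sublists} the forward pass for a hit-operation with argument $u$ visits $O(\log(M/sh_u))$ sub-lists, and the backward pass follows the same sequence of levels. At each visited level at most one ascent can occur, because the partial-sums observation preceding Lemma~\ref{lemma:no-ascent} shows that only the leftmost path-object at a given height can satisfy the ascent condition. Hence the total number of ascents during the operation is $O(\log(M/sh_u))$, contributing $O(c \log(M/sh_u))$ to $\Delta\Phi$. For an insertion the analogous bound degrades to $O(\log M)$ because $sh_u$ starts at $1$. Operations that trigger no balancing at all (unsuccessful searches, unsuccessful deletions, unsuccessful insertions) leave $\Phi$ unchanged and have forward-pass cost $O(\log M)$ by a direct application of Lemmas~\ref{lemma:max-sublists} and~\ref{lemma:forward-4}.

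Summing actual cost and $\Delta\Phi$ yields amortized cost $O(\log(M/sh_u))$ for a hit-operation and $O(\log M)$ for every other operation. The main obstacle I expect is the careful treatment of the ``one ascent per level'' claim: an ascending object immediately joins the backward path at its new height and may ascend again there, so one must argue that the cumulative count of ascents is tied to the number of distinct levels traversed rather than to individual objects. A secondary technical point is verifying that an expansion event only sharpens the descent threshold and cannot enable a burst of previously-unaccounted descents.
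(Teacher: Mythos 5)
Your argument is correct and is essentially the paper's own proof: the paper bounds the $d$ term by observing that the total number of demotions over all operations cannot exceed the total number of promotions, which is in turn at most the number of levels traversed per operation (one ascent per level, by the leftmost-object observation), and your potential $\Phi = c\sum_v h_v$ is just the potential-method phrasing of that same aggregate count. The remaining ingredients --- charging the $O(m)$ expansion and the $O(M)$ rebuild to the $\Omega(m)$ preceding hit-operations, and the implicit use of $M \geq m/2$ to pass between $m$ and $M$ --- likewise match the paper's treatment, including the fact that the paper is no more explicit than you are about the two technical points you flag at the end.
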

\begin{proof}
We will prove the same bounds but with $m$ instead of $M$. Please note that since we rebuild the \listname{} is triggered when $M$ becomes less than $\frac{m}{2}$, we can always assume that $M \geq \frac{m}{2}$ and, thus, the bounds with $m$ and $M$ differ only by a constant. 

First, we deal with the \listname{} expansion procedure: it adds only $O(1)$ amortized time to an operation. The expansion happens when $m$ is equal to the power of two and costs $O(m)$. Since, from the last expansion we performed at least $\frac{m}{2}$ hits operations 
we can amortize the cost $O(m)$ against them. Note that each operation will be amortized against only once, thus the amortization increases the complexity of an operation only by $O(1)$.

Since the primitive operations such as following the list pointer, a promotion with the ascent check and a demotion with the descent check are all $O(1)$, the cost of an operation is in the order of the length of the traversed path. 
According to Theorem~\ref{theorem:total-length}, the total length of the traversed path during an operation is $2 \cdot d + 8 \cdot y$ where $d$ is the number of vertices to demote and $y$ is the number of traversed layers: if the object $u$ was found $y$ is equal to $O\left(\log\frac{m}{sh_u}\right)$, otherwise, it is equal to $\log m$, the height of the \listname.

Note that the number of promotions per operation cannot exceed the number of passed levels $y$, since only one object can satisfy the ascent condition per level. At the same time, the total number of demotions across all operations, i.e., the sum of all $d$ terms, cannot exceed the total number of promotions. Thus, the amortized time of the operation can be bounded by $O(\text{number of levels passed})$ which is equal to what we required.

The amortized bound for \texttt{delete} operation needs some additional care. The operation can be split into two parts: 1)~find the object in the \listname, mark it as deleted and adjust the path; 2)~the reconstruction part when the object is physically deleted. The first part is performed in $O(\log\frac{m}{sh_u})$ as shown above. For the second part, we perform the reconstruction only when the number of hits on objects marked for deletion $m - M$ exceeds the number of hits on all objects $m$, and, thus, $M \leq \frac{m}{2}$. The reconstruction is performed in $O(M) = O(m)$ time as explained in \emph{Efficient Rebuild} part. Thus we can amortize this $O(m)$ to hits operations performed on logically deleted items. Since there were $O(m - M) = O(m)$ such operations, the amortization ``increases'' their complexities only on some constant and only once, since after the reconstruction the corresponding objects are going to be deleted physically.
\end{proof}

\begin{remark}
For example, if all our operations were successful \texttt{contains}, then the asymptotics for \texttt{contains($u$)} will be $O(\log \frac{m}{sh_u})$ where $m$ is the total number of operations performed.

Furthermore, under the same load we can prove the static optimality property~\cite{knuth1997art}. Let $m_i \leq m$ be the total number of operations when we executed $i$-th operation on $u$, then the total time spent is $O\left(\sum\limits_{i=1}^{sh_u} \log \frac{m_i}{i}\right) = O\left(\sum\limits_{i=1}^{sh_u} \log \frac{m}{i}\right)$ which by Lemma 3 from \cite{CBTree} is equal to $O(sh_i + sh_i \cdot \log \frac{m}{sh_i})$. This is exactly the static optimality property.
\end{remark}

\section{Relaxed Rebalancing}
\label{sec:relaxation-theory}

If we build the straightforward concurrent implementation on top of the sequential implementation described in the previous section, it will obviously suffer in terms of performance since each operation (either \texttt{contains}, \texttt{insert} or \texttt{delete}) must take locks on the whole path to update hits counters. This is not a reasonable approach, especially in the case of the frequent \texttt{contains} operation. 
Luckily for us, \texttt{contains} can be split into two phases: the \emph{search} phase, which traverses the \listname{} and is lock-free, and the \emph{balancing} phase, which updates the counters and maintains ascent and descent conditions. 

A straightforward heuristic is to perform rebalancing infrequently---for example, only once in $c$ operations. 
For this, we propose that the operation perform the update of the global operation counter $m$ and per-object hits counter $sh_u$ only with a fixed probability $1 / c$. 
Conveniently, if the operation does not perform the global operation counter update and the balancing, the counters will not change and, so, all the conditions will still be satisfied. 
The only remaining question is how much this relaxation will affect the data structure's guarantees. The next result characterizes the effects of this relaxation. 

\begin{theorem}
Fix a parameter $c \geq 1$. 
In the relaxed sequential algorithm where operation updates hits counters and performs balancing with probability $\frac{1}{c}$, the hit-operation takes $O\left(c \cdot \log \frac{m}{sh_u}\right)$ expected amortized time, where $m$ is the total number of hit-operations performed on all objects in \listname{} up to the current point in the execution.
\end{theorem}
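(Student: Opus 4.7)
The overall strategy is to couple the relaxed execution with a non-relaxed execution carried out only on the random sub-sequence of ``selected'' (counter-updating) operations, and then to take expectation over this selection. Let $m'$ denote the stored global hit counter and $sh'_u$ the stored per-object counter at the moment of the operation in question. Because only selected operations modify counters or trigger balancing, after $m$ real operations the data structure's state coincides with the state produced by the non-relaxed algorithm after $m'$ operations on the selected sub-sequence. In particular, every invariant from the preceding sections---notably Lemma~\ref{lemma:no-ascent}, Lemma~\ref{lemma:max-sublists}, and Theorem~\ref{thm:main-asymptotics}---continues to hold verbatim with $m$ replaced by $m'$ and $sh_u$ by $sh'_u$.

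I would then decompose the cost of a single real operation on $u$ into a traversal cost (paid on every operation) and a balancing cost (paid only when the operation is selected). By Lemma~\ref{lemma:max-sublists} together with Lemmas~\ref{lemma:forward-4} and~\ref{lemma:backward-4}, the traversal cost is $O(\log(m'/sh'_u))$; by Theorem~\ref{thm:main-asymptotics} applied to the selected sub-sequence, the balancing work is amortized $O(\log(m'/sh'_u))$ per selected operation. Either way, the amortized per-operation cost conditional on the stored counters is $O(\log(m'/sh'_u))$, and it remains to take expectation. Since each prior hit on $u$ is independently selected with probability $1/c$, $sh'_u$ stochastically dominates $1+\mathrm{Bin}(sh_u-1,1/c)$ and $m'\le\mathrm{Bin}(m,1/c)$. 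A Chernoff bound handles the regime $sh_u=\Omega(c)$, in which $sh'_u=\Theta(sh_u/c)$ and $m'=\Theta(m/c)$ with high probability, yielding $\EE[\log(m'/sh'_u)] = O(\log(m/sh_u))$. In the remaining regime $sh_u=O(c)$ we fall back on the deterministic bound $\log(m'/sh'_u)\le\log m$, which is absorbed into $O(c\log(m/sh_u))$ because $c\log(m/sh_u)=\Omega(c\log(m/c))=\Omega(\log m)$ once $m$ is sufficiently larger than $c$.

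The main obstacle is precisely this low-frequency regime: when $sh_u=O(c)$, the stored counter $sh'_u$ can equal $1$ with constant probability and $\log(m'/sh'_u)$ can be as large as $\log m$ on individual operations. This is exactly where the multiplicative factor $c$ in the statement becomes necessary---it provides just enough slack to absorb these rare large-cost events, and explains why one cannot hope for the naively optimistic $O(\log(m/sh_u))$. A secondary technical point is that the potential argument underlying Theorem~\ref{thm:main-asymptotics} lives on the stored counters, so one must verify that amortization discharges correctly even though each selected operation effectively ``stands in for'' $c$ real operations, and that the \listname{} expansion and the \texttt{delete}-rebuild schedule---both triggered against $m'$ rather than $m$---still amortize correctly against the longer sequence of real operations.
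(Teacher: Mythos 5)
There is a genuine gap in your first step, and it sits exactly where the paper's proof does its real work. Your coupling observation is fine: non-selected operations change nothing, so the structure always coincides with the exact algorithm run on the selected sub-sequence, and Lemmas~\ref{lemma:no-ascent}--\ref{lemma:backward-4} hold with $m'$ and $sh'_u$ in place of $m$ and $sh_u$. But from this you conclude that ``the traversal cost is $O(\log(m'/sh'_u))$'' for every operation, and that is not what those lemmas give you. The path length is $2d+8y$ (Theorem~\ref{theorem:total-length}), where $d$ counts the visited objects that \emph{do} satisfy the descent condition; Lemma~\ref{lemma:forward-4} only bounds the ones that do not. In the exact algorithm $d$ equals the number of demotions performed by that very operation, so the sum of the $d$ terms is charged against promotions. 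In the relaxed algorithm a non-selected operation walks past these objects without demoting them, so the same descent-eligible object can contribute to the $d$ term of many successive operations; your decomposition into ``traversal cost paid by everyone'' plus ``balancing cost paid by selected operations'' silently drops this repeated cost, which is paid by non-selected operations and is not amortized by any analysis of the selected sub-sequence alone. You gesture at this as a ``secondary technical point'' about amortization discharging correctly, but it is the central point, and the missing argument is precisely where the factor $c$ comes from: each object satisfying the descent condition is traversed a $\mathrm{Geometric}(1/c)$ number of times, i.e., $c$ times in expectation, before some selected operation demotes it, so the expected sum of the $d$ terms is at most $c$ times the number of demotions, which is at most $c$ times the sum of the $y$ terms, yielding amortized $O\bigl(c\cdot\log\frac{m'}{sh'_u}\bigr)$ conditional on the stored counters.

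This also means your diagnosis of where the multiplicative $c$ is ``necessary'' is off. In the low-frequency regime the paper does not pay a multiplicative $c$ at all: it bounds $\EE[\log(sh'_u+1)]\ge 0\ge \log sh_u-\log(27c^3)$, an additive constant for fixed $c$, so that $\EE\bigl[\log\frac{m'}{sh'_u}\bigr]=O\bigl(\log\frac{m}{sh_u}\bigr)$ with no factor of $c$. Indeed an additive penalty \emph{must} suffice there: once the first part correctly yields $O\bigl(c\cdot\log\frac{m'}{sh'_u}\bigr)$, your multiplicative fallback $\log\frac{m'}{sh'_u}\le\log m\le c\log\frac{m}{sh_u}$ would compound to $O\bigl(c^2\log\frac{m}{sh_u}\bigr)$. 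A further technical issue: your threshold $sh_u=\Omega(c)$ for the Chernoff regime is too weak, since the failure probability is then only a constant and multiplies $\log m$ inside the expectation; the paper requires $sh_u\cdot p\ge 3\,sh_u^{2/3}$, i.e., $sh_u\ge 27c^3$, to make the tail contribution $O(1)$ (Claim~\ref{claim:E:bound}). Your second step is otherwise in the spirit of the paper's (Jensen on the numerator, concentration on the denominator), but it has to be run so as to deliver $\EE\bigl[\log\frac{m'}{sh'_u}\bigr]=O\bigl(\log\frac{m}{sh_u}\bigr)$ with no extra factor of $c$.
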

\begin{proof}
The theoretical analysis above (Theorems~\ref{theorem:total-length} and \ref{thm:main-asymptotics}) is based on the assumption that the algorithm maintains exact values of the counters $m$ and $sh_u$~--- the total number of hit-operations performed to the existing objects and the current number of hit-operations to $u$. However, given the relaxation, the algorithm can no longer rely on $m$ and $sh_u$ since they are now updated only with probability $c$. We denote by $m'$ and $sh'_u$ the relaxed versions of the real counters $m$ and $sh_u$.

The proof consists of two parts. First, we show that the amortized complexity of hits operation to $u$ is equal to $O\left(c \cdot \log \frac{m'}{sh'_u}\right)$ in expectation. Secondly, we show that the approximate counters behave well, i.e., $\EE \left[ \log \frac{m'}{sh'_u} \right]= O\left(\log \frac{m}{sh_u}\right)$. Bringing these two together yields that the amortized complexity of hits operations is $O\left(c \cdot \log \frac{m}{sh_u}\right)$ in expectation.

The first part is proven similarly to Theorem~\ref{thm:main-asymptotics}. We start with the statement that follows from Theorem~\ref{theorem:total-length}: the complexity of any contains operation is equal to $2d + 8y$ where $d$ is the number of objects satisfying the descent condition and $y = 3 + \log \frac{m'}{sh'_u}$. Obviously, we cannot use the same argument as in Theorem~\ref{thm:main-asymptotics} since now $d$ is not equal to the number of descents: the objects which satisfy the descent condition are descended only with probability $\frac{1}{c}$. Thus, we have to bound the sum of $d$ by the total number of descents.

Consider some object $x$ that satisfies the descent condition, i.e. it is counted in $d$ term of the complexity. Then $x$ will either be descended, or will not satisfy the descent condition after $c$ operations passing through it in expectation. 
Mathematically, the event that $x$ is descended follows an exponential distribution with success (demotion) probability $\frac{1}{c}$. Hence, the expected number of operations before $x$ descends is $c$. 

This means that the object $x$ will be counted in terms of type $d$ no more than $c$ times in expectation. By that, the total complexity of all operations is equal to the sum of $8y$ terms plus $2c$ times the number of descents. Since the number of descents cannot exceed the number of ascents, which in turn cannot exceed the sum of the $y$ terms, the total complexity does not exceed the sum of $10 \cdot c \cdot y$ terms. Finally, this means that the amortized complexity complexity of hits operation is $O(c \cdot y) = O\left(c \cdot \log \frac{m}{sh'_u}\right)$ in expectation.

Next, we prove the second main claim, i.e., that  
$$\EE \left( \log \frac{m'}{sh'_u} \right) = O\left(\log \frac{m}{sh_u}\right).$$ 
Note that the relaxed counters $m'$ and $sh'_u$ are Binomial random variables with probability parameter $p = \frac{1}{c}$, and number of trials $m$ and $sh_u$, respectively.

To avoid issues with taking the logarithm of zero, let us bound $\EE\left(\log \frac{m' + 1}{sh'_u + 1}\right)$, which induces only a constant offset. We have: 

\begin{align*}
\EE \left[ \log \frac{m' + 1}{sh'_u + 1} \right] = & \, \EE  \left[ \log (m' + 1) \right]  \,-\,\EE \left[ \log (sh'_u + 1) \right] \\
\underset{\text{Jensen}}\leq & \log (\EE m' + 1)\,-\,\EE \log (sh'_u + 1) 
=  \log (mp + 1)\,-\,\EE \log (sh'_u + 1).
\end{align*}

The next step in our argument will be to lower bound $\EE \log (sh'_u + 1)$. 
For this, we can use the observation that $sh'_u \sim Bin_{sh_u, p}$, the Chernoff bound, and a careful derivation to obtain the following result, whose proof is left to the Appendix~\ref{app:bound}.

\begin{claim}
\label{claim:E:bound}
If $X \sim Bin_{n, p}$ and $np \geq 3 n^{2/3}$ then
$\EE \left[ \log(X + 1) \right] \geq \log np - 4.$
\end{claim}

Based on this, we obtain 
$\log(mp + 1) - \EE [ \log(sh'_u + 1) ]  \leq \log(mp + 1) - \log (sh_u \cdot p) + 4 \leq \log \frac{m}{sh_u} + 5$.

However, this bound works only for the case when $sh_u \cdot p \geq 3 \cdot (sh_u)^{2/3}$. Consider the opposite: $sh_u \leq \frac{27}{p^3}$. Then, $\EE [ \log (sh'_u + 1) ] \geq 0 \geq \log sh_u - \log \frac{27}{p^3}$. Note that the last term is constant, so we can conclude that $\EE[\log \frac{m' + 1}{sh'_u +1}] \leq \log \frac{m}{sh_u} + C$. This matches our initial claim that $\EE[\log \frac{m' + 1}{sh'_u + 1}] = O(\log \frac{m}{sh_u})$. 
\end{proof}


\section{The Concurrent \Listname{}}
\label{sec:practical}

\myparagraph{Overview.} In this section we describe on how to implement scalable lock-based implementation of the \listname{} described in the previous section. The first idea that comes to the mind is to implement the operations as in Lazy Skip-list~\cite{HSBook}: we traverse the data structure in a lock-free manner in the search of $x$ and fill the array of predecessors of $x$ on each level; if $x$ is not found then the operation stops; otherwise, we try to lock all the stored predecessors; if some of them are no longer the predecessors of $x$ we find the real ones or, if not possible, we restart the operation; when all the predecessors are locked we can traverse and modify the backwards path using the presented sequential algorithm without being interleaved. When the total number of operations $m$ becomes a power of two, we have to increase the height of the \listname{} by one: in a straightforward manner, we have to take the lock on the whole data structure and then rebuild it.

There are several major issues with the straightforward implementation described above. At first, the \emph{balancing} part of the operation is too coarse-grained---there are a lot of locks to be taken and, for example, the lock on the topmost level forces the operations to serialize. The second is that the list expansion by freezing the data structure and the following rebuild when $m$ exceeds some power of two is very costly.

\myparagraph{Relaxed and Forward Rebalancing.} 
The first problem can be fixed in two steps. The most important one is to relax guarantees and perform \emph{rebalancing} only periodically, for example, with probability $\frac{1}{c}$ for each operation. 
Of course, this relaxation will affect the bounds---please see  Section~\ref{sec:relaxation-theory} for the proofs. 
However, this relaxation is not sufficient, since we cannot relax the balancing phase of \texttt{insert}($u$) which physically links an object. All these \texttt{insert} functions are going to be serialized due to the lock on the topmost level. Note that without further improvements we cannot avoid taking locks on each predecessor of $x$, since we have to update their counters.
We would like to have more fine-grained implementation.
However, our current sequential algorithm does not allow this, since it updates the path only backwards and, thus, needs the whole path to be locked. 
To address this issue, we introduce a different variant of our algorithm, which does rebalancing \emph{on the forward traversal}.

We briefly describe how this \emph{forward-pass algorithm} works. We maintain the basic structure of the algorithm. Assume we traverse the \listname{} in the search of $x$, and suppose that we are now at the last node $v$ on the level $h$ which precedes $x$. The only node on level $h - 1$ which can be ascended is $v$'s successor on that level, node $u$: 
we check the ascent condition on $u$ or, in other words, compare $\sum_{w \in S_u} hits(C^{h - 1}_w) = hits^h_v - hits^{h - 1}_v$ with $\frac{m}{2^{k - h}}$, and promote $u$, if necessary. 
Then, we iterate through all the nodes on the level $h - 1$ while the keys are less than $x$: if the node satisfies the descent condition, we demote it. Note that the complexity bounds for that algorithm are the same as for the previous one and can be proven exactly the same way (see Theorem~\ref{thm:main-asymptotics}).

The main improvement brought by this forward-pass algorithm is that now the locks can be taken in a hand-over-hand manner: take a lock on the highest level $h$ and update everything on level $h - 1$; take a lock on level $h - 1$, release the lock on level $h$ and update everything on level $h - 2$; take a lock on level $h - 2$, release the lock on level $h - 1$ and update everything on level $h - 3$; and so on. By this locking pattern, the balancing part of different operations is performed in a sequential manner: an operation cannot overtake the previous one and, thus, the $hits$ counters cannot be updated asynchronously. However, at the same time we reduce contention: locks are not taken for the whole duration of the operation.

\myparagraph{Lazy Expansion.} The expansion issue is resolved in a lazy manner.
The \listname{} maintains the counter $zeroLevel$ which represents the current lowest level.
When $m$ reaches the next power of two, $zeroLevel$ is decremented, i.e., we need one more level. (To be more precise, we decrement $zeroLevel$ also lazily: we do this only when some node is going to be demoted from the current lowest level.)
Each node is allocated with an array of $next$ pointers with length $64$ (as discussed, the height $64$ allows us to perform $2^{64}$ operations which is more than enough) and maintains the lowest level to which the node belonged during the last traverse.
When we traverse a node and it appears to have the lowest level higher than $zeroLevel$, we update its lowest level and fill the necessary cells of $next$ pointers.
By doing that we make a lazy expansion of \listname{} and we do not have to freeze whole data structure to rebuild.
For the pseudo-code of lazy expansion, please see Figure~\ref{fig:auxiliary}.
For the pseudo-code of the \listname{}, we refer to Appendix~\ref{sec:code}.

The following Theorem trivially holds due to the specificity of skip-list: if an operation reaches  a sub-list of lower height than its target elementm it will still find it, if it is present.
\begin{theorem}
The presented concurrent \listname{} algorithm is linearizable.
\end{theorem}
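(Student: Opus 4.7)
The plan is to follow the standard recipe for proving lazy skip-lists linearizable (as in~\cite{HSBook}), reducing the claim to two orthogonal obligations: (i) the abstract-state--changing operations have well-defined linearization points that satisfy the sequential specification of a set, and (ii) the rebalancing activity (ascent, descent, counter updates, lazy expansion) is invisible to external observers because it does not alter the abstract set. The key structural property I will exploit is the one stated just before the theorem: the bottom-most list is always complete, so reaching a sublist of lower height than necessary during traversal still yields a correct answer. This lets me decouple the linearizability argument from the splay-list-specific rebalancing machinery.

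First I would fix linearization points exactly as in the lazy skip-list. For a successful \texttt{contains($u$)}, the linearization point is the read of the unmarked object $u$ on the bottom list (or, for the optimized forward pass, the last read of the object's deleted flag); for an unsuccessful \texttt{contains($u$)}, it is the moment when some pair of adjacent bottom-level nodes $(a,b)$ with $a<u<b$ is simultaneously visible and valid. For \texttt{insert($u$)}, the linearization point is either the CAS/store that physically links $u$ into the bottom list, or the store that clears the deletion mark on an already-present node; for an unsuccessful \texttt{insert}, it is the read that observes $u$ unmarked. For \texttt{delete($u$)}, the linearization point is the setting of the deletion mark (logical deletion), which is what determines membership externally; for an unsuccessful \texttt{delete}, it is the analogous read showing $u$ is absent or already marked. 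The fact that locks are held on predecessors when linking or marking, combined with validation after locking (reacquire predecessor if it changed, restart otherwise), ensures these points are well-defined and that the observed predecessor/successor relationship is consistent with the bottom list at the linearization instant.

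Next I would argue that the rebalancing operations (promotion, demotion, updates to $hits_u$, $hits^h_u$, and $m$, and lazy expansion of $zeroLevel$ and the per-node $next$ arrays) are not externally observable. Membership in the abstract set is determined solely by which unmarked nodes appear in the bottom list; promotions and demotions move nodes between upper sublists without touching the bottom list, and counter updates do not affect the set at all. The hand-over-hand locking discipline introduced for the forward-pass variant guarantees that at each level the modifications made by one operation cannot be interleaved with the modifications of another at the same level: an operation must hold the lock at level $h$ before touching level $h-1$, so all structural changes it performs to the upper sublists are serialized with respect to other rebalancing threads. Lazy expansion is equally benign, since it only allocates additional $next$ slots on a node and links them into newly activated higher sublists; no existing pointer at or below $zeroLevel$ is invalidated.

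The main obstacle, as in any lock-free-search skip-list proof, is showing that a concurrent search cannot miss a key that is present at its linearization point, even while another thread is promoting, demoting, or expanding in parallel. Here I would lean on the property highlighted in the paragraph preceding the theorem: since every key is always physically present in the bottom list until unlinked, and since demotion only \emph{removes} pointers from higher sublists (not from level $0$), a traversal that follows a possibly stale pointer and arrives at an unexpectedly low level will still descend to the bottom list and find the key. Formally, I would prove a reachability invariant: for every unmarked node $u$ and every reachable node $w$ with key less than $u$ at any level $h\le h_u$, there is a forward path from $w$ to $u$ using only pointers that exist at some time during the traversal. Combined with the standard validation step at the locking phase of update operations (which restarts if predecessors have changed), this invariant implies that each operation's observations are consistent with some legal sequential execution in which its chosen linearization point occurs between the invocation and response, completing the proof.
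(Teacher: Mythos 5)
Your proposal is correct and is built around exactly the observation the paper itself uses to justify this theorem: the bottom list always contains every physically present key, rebalancing only adds or removes pointers in the upper sublists, so a traversal that falls to a lower level than expected still finds its target. The paper states this in a single sentence and declares the theorem to hold ``trivially,'' so your elaboration (linearization points as in the lazy skip-list, invisibility of promotions/demotions/counter updates/lazy expansion to the abstract set, and the reachability invariant for concurrent searches) is simply the full version of the same argument rather than a different route.
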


\vspace{-0.5cm}
\section{Experimental Evaluation}
\myparagraph{Environment and Methodology.} We evaluate algorithms on a 4-socket Intel Xeon Gold 6150 2.7 GHz server with 18 threads per socket. The code is written in C++ and was compiled by MinGW GCC 6.3.0 compiler with \texttt{-O2} optimizations. 
Each experiment was performed $10$ times and all the values presented are averages. 
The code is available at \url{https://cutt.ly/disc2020353}.

\myparagraph{Workloads and Parameters.} Due to space constraints, our experiments in this section consider read-only workloads with unbalanced access distribution, which are the focus of our paper. We also execute uniform and read-write workloads, whose results we present in Appendix~\ref{sec:additional-results}. In our experiments, we describe a family of workloads by $n-x-y$, which should be read as: given $n$ keys, $x\%$ of the \texttt{contains} are performed on $y\%$ of the keys.
More precisely, we first populate the \listname{} with $n$ keys and randomly choose a set of ``popular'' keys $S$ of size $y \cdot n$. 
We then start $T$ threads, each of which iteratively picks an element and performs the \texttt{contains} operation, for $10$ seconds. With probability $x$ we choose a random element from $S$, otherwise, we choose an element outside of $S$ uniformly at random.

For our experiments, we choose the following workloads: $10^5-90-10$, $10^5-95-5$ and $10^5-99-1$. That is, $90\%$, $95\%$, and $99\%$ of the operations go into $10\%$, $5\%$, and $1\%$ of the keys, respectively. 
Further, we vary the \emph{balancing rate/probability}, which we denote by $p$: this is the probability that a given operation will update hit counters and perform rebalancing. 
In Appendix~\ref{sec:additional-results}, we also examine uniform and Zipf distributions. 

\begin{table*}
\resizebox{\linewidth}{!}{%
\begin{tabular}{|c|c|c|c|c|c|c|c|c|}
\hline
$10^5-90-10$ & Skip-list & SL $p=1$ & SL $p=\frac{1}{2}$ & SL $p=\frac{1}{5}$ & SL $p=\frac{1}{10}$ & SL $p=\frac{1}{100}$ & SL $p=\frac{1}{1000}$\\\hline
ops/sec & 2874600.0 & 0.60x & 0.78x & 1.00x & 1.10x & 1.12x & 1.02x \\\hline
length & 30.81 & 23.06 & 23.07 & 23.08 & 23.13 & 23.75 & 25.06 \\\hline
& & CBTree $p=1$ & CBTree $p=\frac{1}{2}$ & CBTree $p=\frac{1}{5}$ & CBTree $p=\frac{1}{10}$ & CBTree $p=\frac{1}{100}$ & CBTree $p=\frac{1}{1000}$ \\\hline
ops/secs & & 1.15x & 1.36x & 1.59x & 1.71x & 1.71x & 1.52x\\\hline
length & & 9.13 & 9.14 & 9.15 & 9.17 & 9.37 & 9.81\\\hline
\end{tabular}
}
\label{table:90-10}
\caption{Operations per second and average length of a path on $10^5-90-10$ workload.}

\resizebox{\linewidth}{!}{%
\begin{tabular}{|c|c|c|c|c|c|c|c|c|}
\hline
$10^5-95-5$ & Skip-list & SL $p=1$ & SL $p=\frac{1}{2}$ & SL $p=\frac{1}{5}$ & SL $p=\frac{1}{10}$ & SL $p=\frac{1}{100}$ & SL $p=\frac{1}{1000}$\\\hline
ops/sec & 2844520.0 & 0.69x & 0.93x & 1.21x & 1.34x & 1.39x & 1.17x \\\hline
length & 30.84 & 21.62 & 21.63 & 21.65 & 21.70 & 22.33 & 24.46 \\\hline
& & CBTree $p=1$ & CBTree $p=\frac{1}{2}$ & CBTree $p=\frac{1}{5}$ & CBTree $p=\frac{1}{10}$ & CBTree $p=\frac{1}{100}$ & CBTree $p=\frac{1}{1000}$ \\\hline
ops/secs & & 1.33x & 1.61x & 1.90x & 2.04x & 2.09x & 1.79x\\\hline
length & & 8.61 & 8.61 & 8.62 & 8.65 & 8.90 & 9.58\\\hline
\end{tabular}
}
\label{table:95-5}
\caption{Operations per second and average length of a path on $10^5-95-5$ workload.}

\resizebox{\linewidth}{!}{%
\begin{tabular}{|c|c|c|c|c|c|c|c|c|}
\hline
$10^5-99-1$ & Skip-list & SL $p=1$ & SL $p=\frac{1}{2}$ & SL $p=\frac{1}{5}$ & SL $p=\frac{1}{10}$ & SL $p=\frac{1}{100}$ & SL $p=\frac{1}{1000}$\\\hline
ops/sec & 3559320.0 & 0.85x & 1.19x & 1.65x & 1.89x & 2.01x & 1.64x \\\hline
length & 31.00 & 17.13 & 17.16 & 17.23 & 17.30 & 18.59 & 21.00 \\\hline
& & CBTree $p=1$ & CBTree $p=\frac{1}{2}$ & CBTree $p=\frac{1}{5}$ & CBTree $p=\frac{1}{10}$ & CBTree $p=\frac{1}{100}$ & CBTree $p=\frac{1}{1000}$ \\\hline
ops/secs & & 1.37x & 1.72x & 2.06x & 2.25x & 2.36x & 2.04x \\\hline
length & & 7.25 & 7.23 & 7.26 & 7.28 & 7.52 & 8.53\\\hline
\end{tabular}
}
\label{table:99-1}
\caption{Operations per second and average length of a path on $10^5-99-1$ workload.}
\end{table*}

\myparagraph{Goals and Baselines.}
We aim to determine whether 1) the \listname{} can improve over the throughput of the baseline skip-list by successfully leveraging the skewed access distribution; 
2) whether it scales, and what is the impact of update rates and number of threads; and, finally,
3) whether it can be competitive with the CBTree data structure in sequential and concurrent scenarios.

\myparagraph{Sequential evaluation.} In the first round of experiments, we compare how the single-threaded \listname{} performs under the chosen workloads. We execute it with different settings of $p$, the probability of adjustment, taking values $1$, $\frac{1}{2}$, $\frac{1}{5}$, $\frac{1}{10}$, $\frac{1}{100}$ and  $\frac{1}{1000}$. 
We compare against the sequential skip-list and CB-Tree. We measure two values: the number of operations per second and the average length of the path traversed. The results are presented in Tables 1---3 (\Listname{} is abbreviated SL). For readability, throughput results are presented relative to the skip-list baseline.  

Relative to the skip-list, the first observation is that, for high update rates (1 through 1/5), the \listname{} predictably only matches or even loses performance. 
However, this trend improves as we reduce the update rate, 
and, more significantly, as we increase the access rate imbalance: for $99-1$, the sequential \listname{} obtains a throughput improvement of $2\times$. This improvement directly correlates with the length of the access path (see third row). 
At the same time, notice the negative impact of very low update rates (last column), as the average path length increases, which leads to higher average latency and decreased throughput. 
We empirically found the best update rate to be around $1 / 100$, trading off latency with per-operation cost.

Relative to the sequential CBTree, we notice that the \listname{} generally yields lower throughput. This is due to two factors: 1) the CBTree is able to yield shorter access paths, due to its structure and constants; 2) the tree tends to have better cache behavior relative to the skip-list backbone. 
Given the large difference in terms of average path length, it may seem surprising that the \listname{} is able to provide close performance. This is because of the caching mechanism: as long as the path length for popular elements is short enough so that they all are mostly in cache, the average path length is not critical. We will revisit this observation in the concurrent case.

\begin{figure*}
\begin{subfigure}{.33\textwidth}
\includegraphics[width=\linewidth]{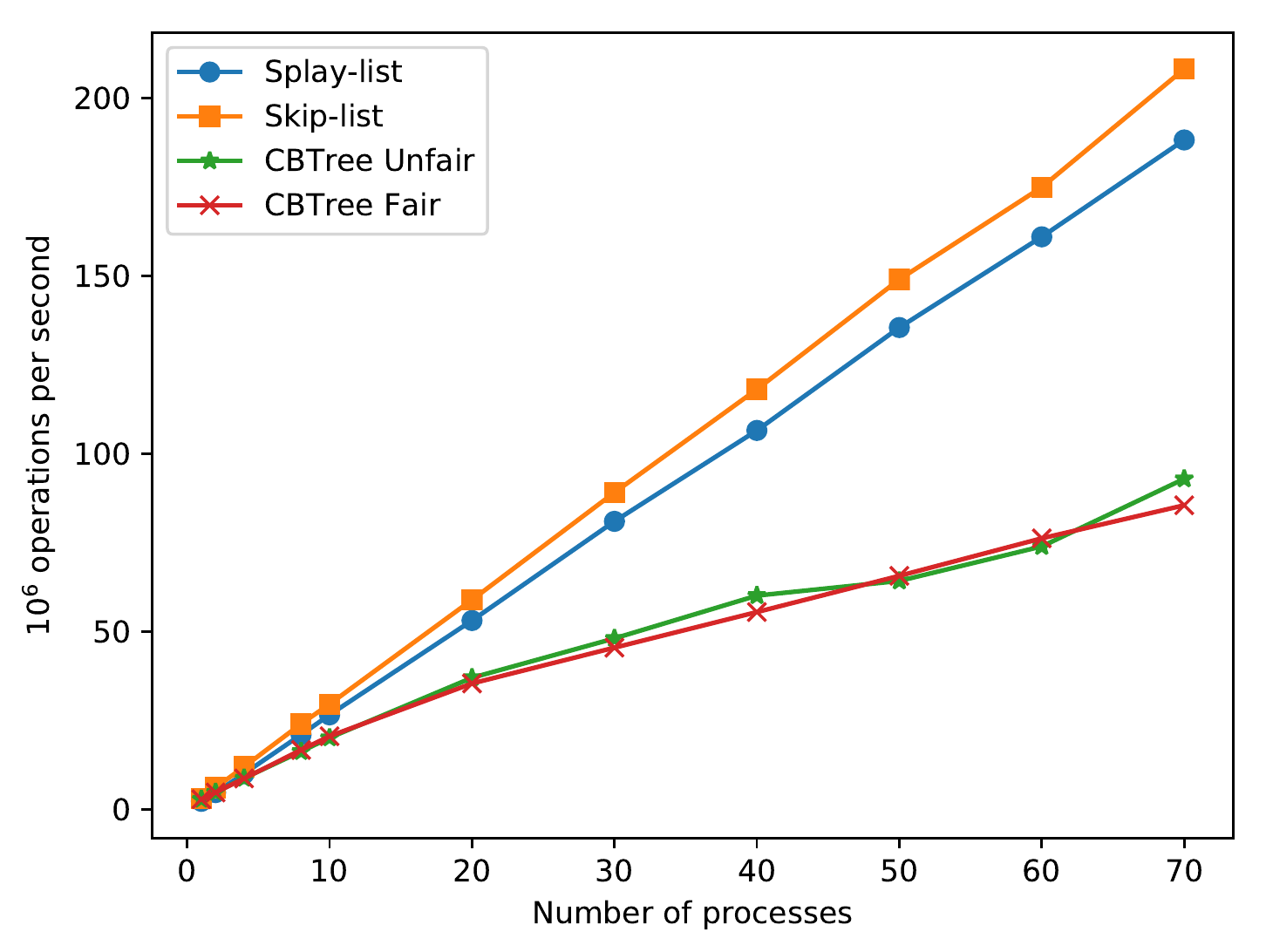}
\caption{$p = \nicefrac{1}{10}$}
\end{subfigure}%
\begin{subfigure}{.33\textwidth}
\includegraphics[width=\linewidth]{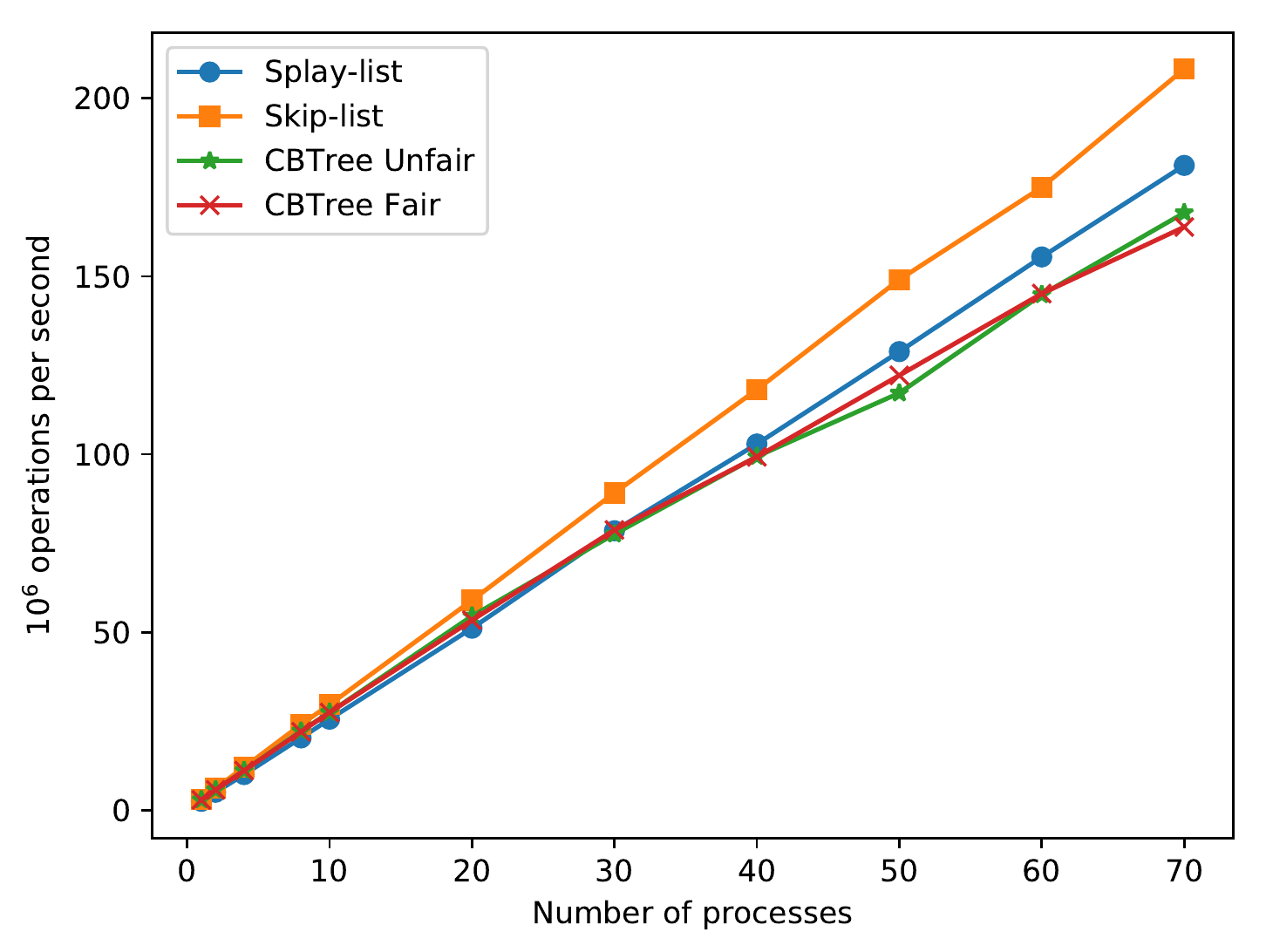}
\caption{$p = \nicefrac{1}{100}$}
\end{subfigure}%
\begin{subfigure}{.33\textwidth}
\includegraphics[width=\linewidth]{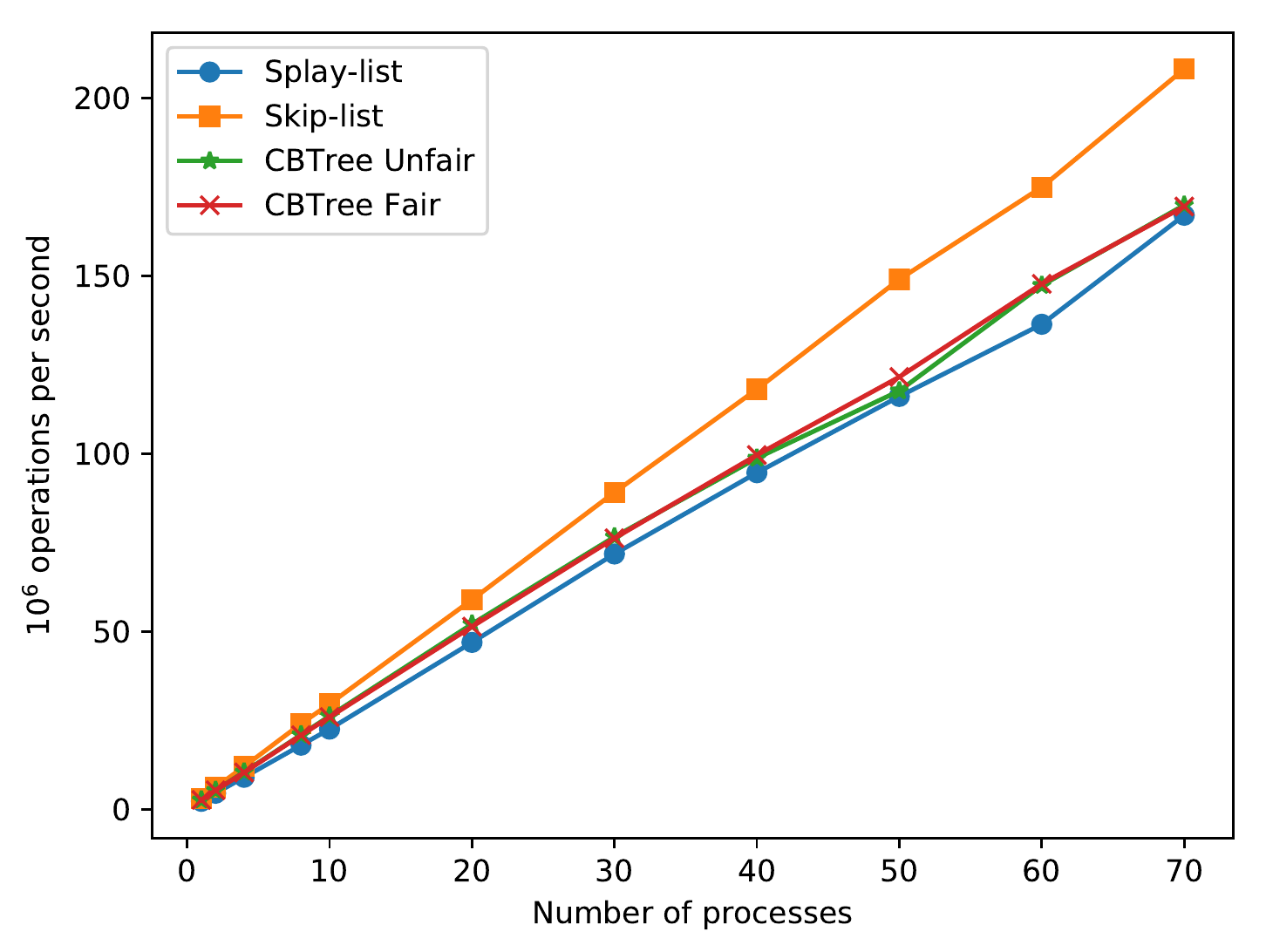}
\caption{$p = \nicefrac{1}{1000}$}
\end{subfigure}
\caption{Concurrent throughput for $10^5-90-10$ workload.}
\label{fig:90-10}
\end{figure*}

\begin{figure*}
\begin{subfigure}{.33\textwidth}
\includegraphics[width=\linewidth]{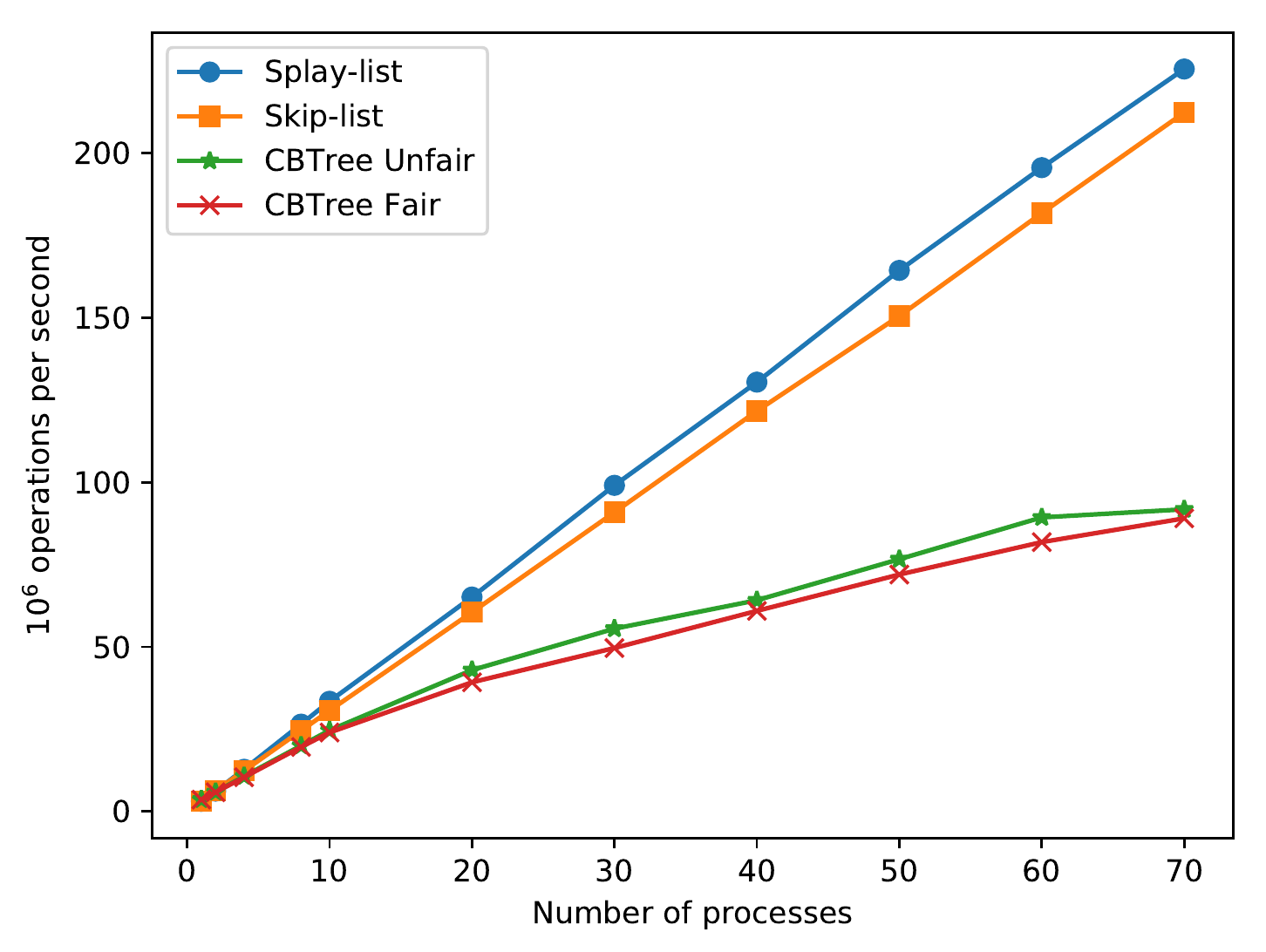}
\caption{$p = \nicefrac{1}{10}$}
\end{subfigure}%
\begin{subfigure}{.33\textwidth}
\includegraphics[width=\linewidth]{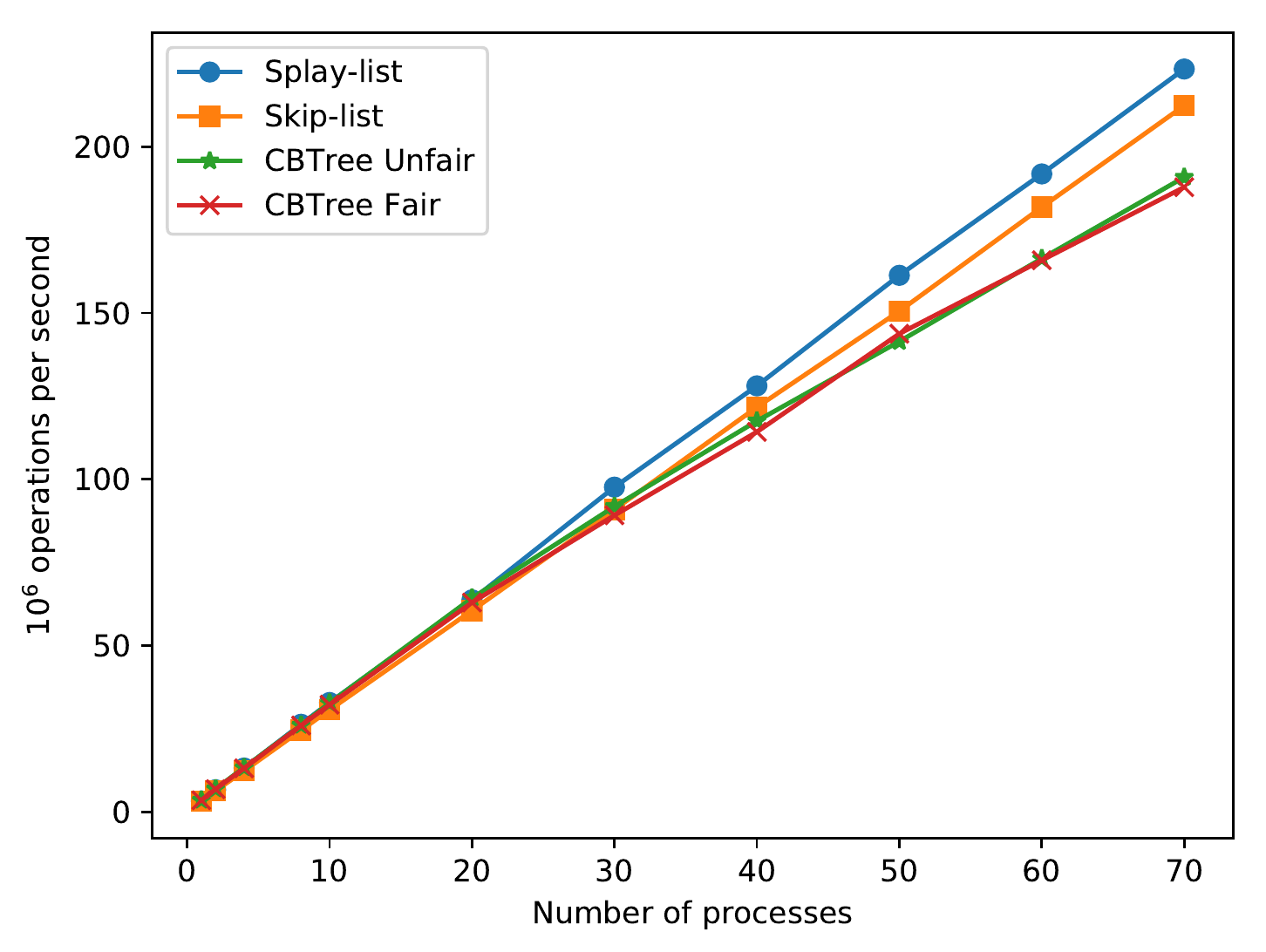}
\caption{$p = \nicefrac{1}{100}$}
\end{subfigure}%
\begin{subfigure}{.33\textwidth}
\includegraphics[width=\linewidth]{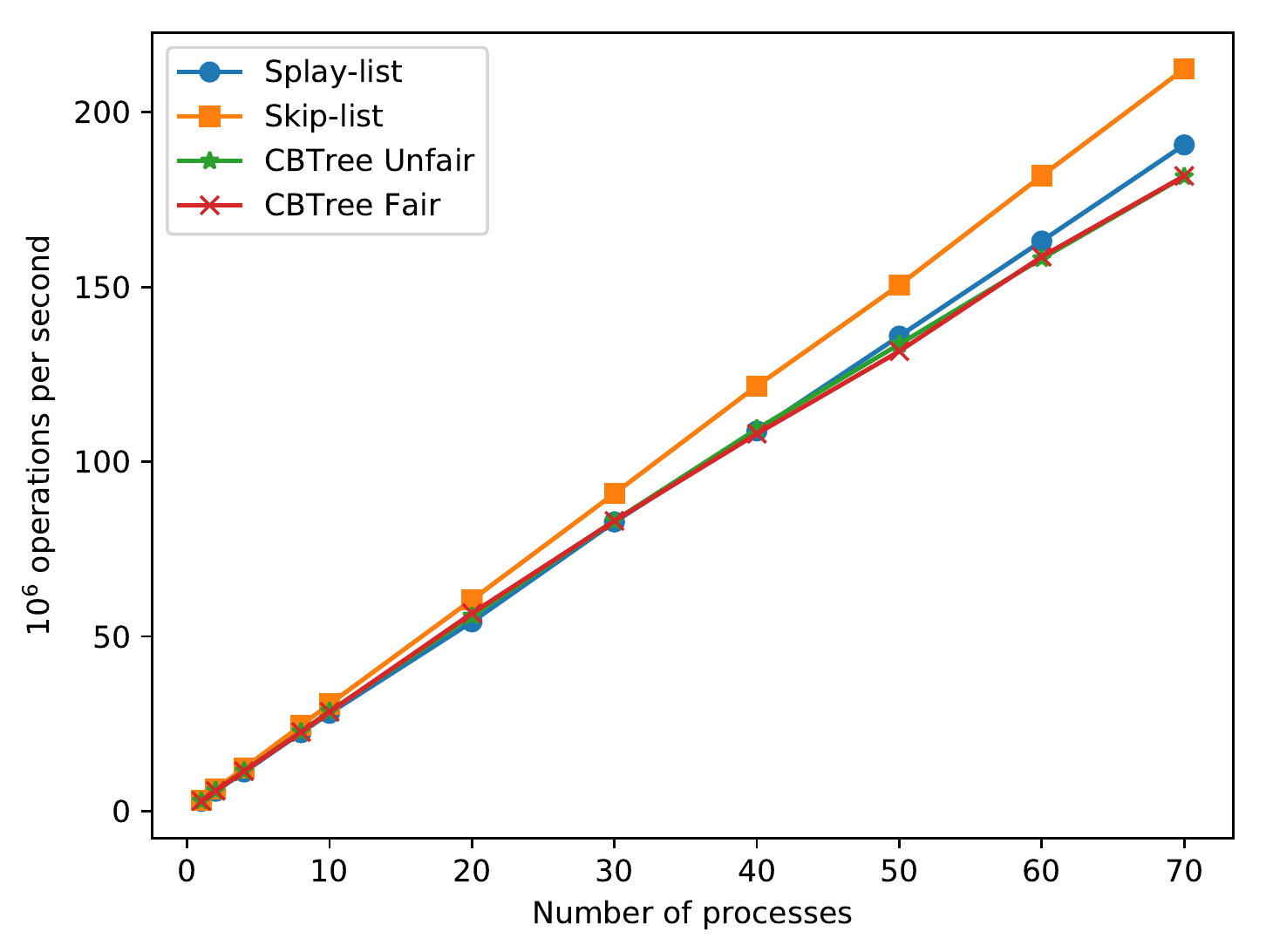}
\caption{$p = \nicefrac{1}{1000}$}
\end{subfigure}
\caption{Concurrent throughput for  $10^5-95-5$ workload.}
\label{fig:95-5}
\end{figure*}

\begin{figure*}
\begin{subfigure}{.33\textwidth}
\includegraphics[width=\linewidth]{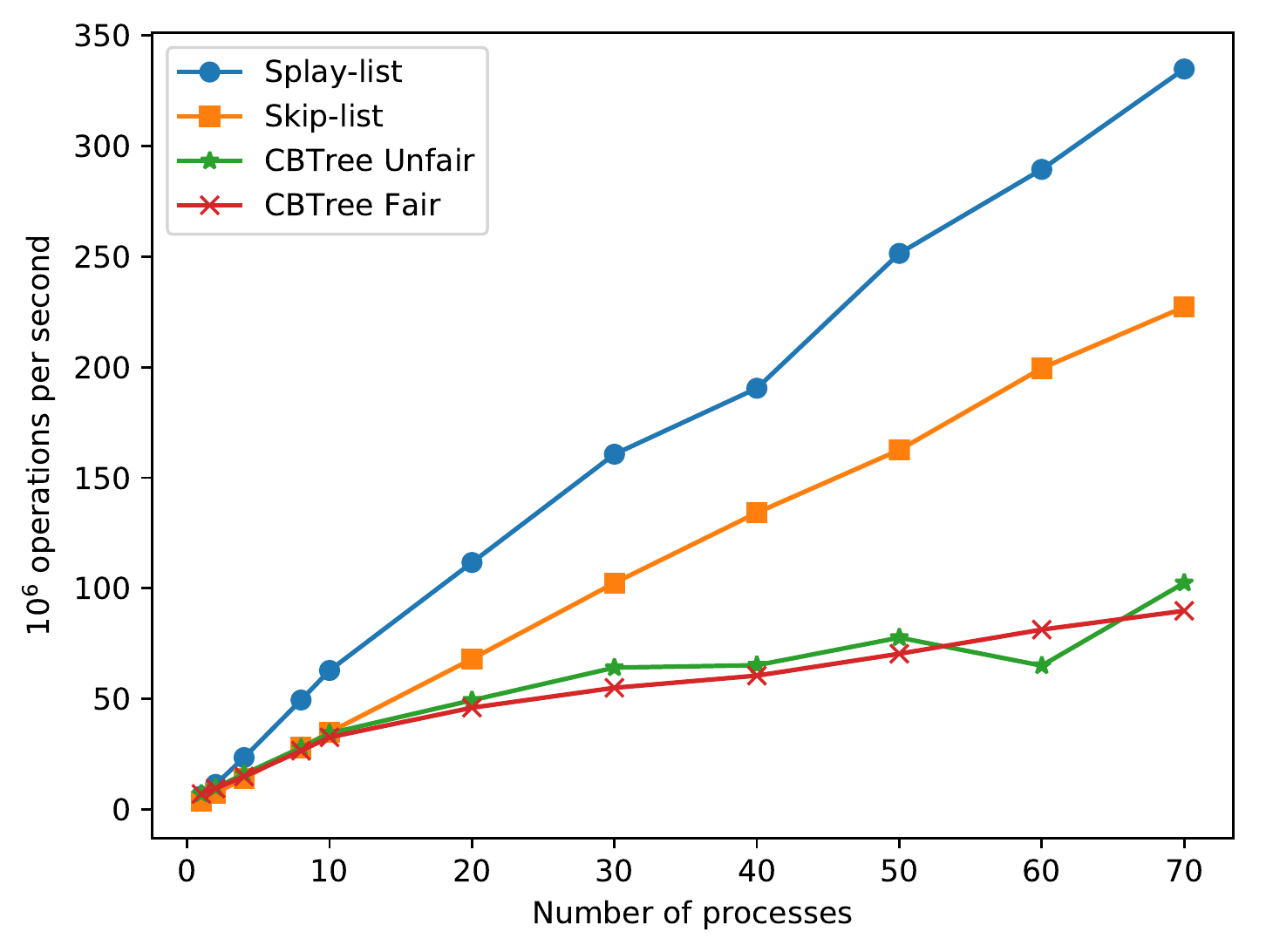}
\caption{$p = \nicefrac{1}{10}$}
\end{subfigure}
\begin{subfigure}{.33\textwidth}
\includegraphics[width=\linewidth]{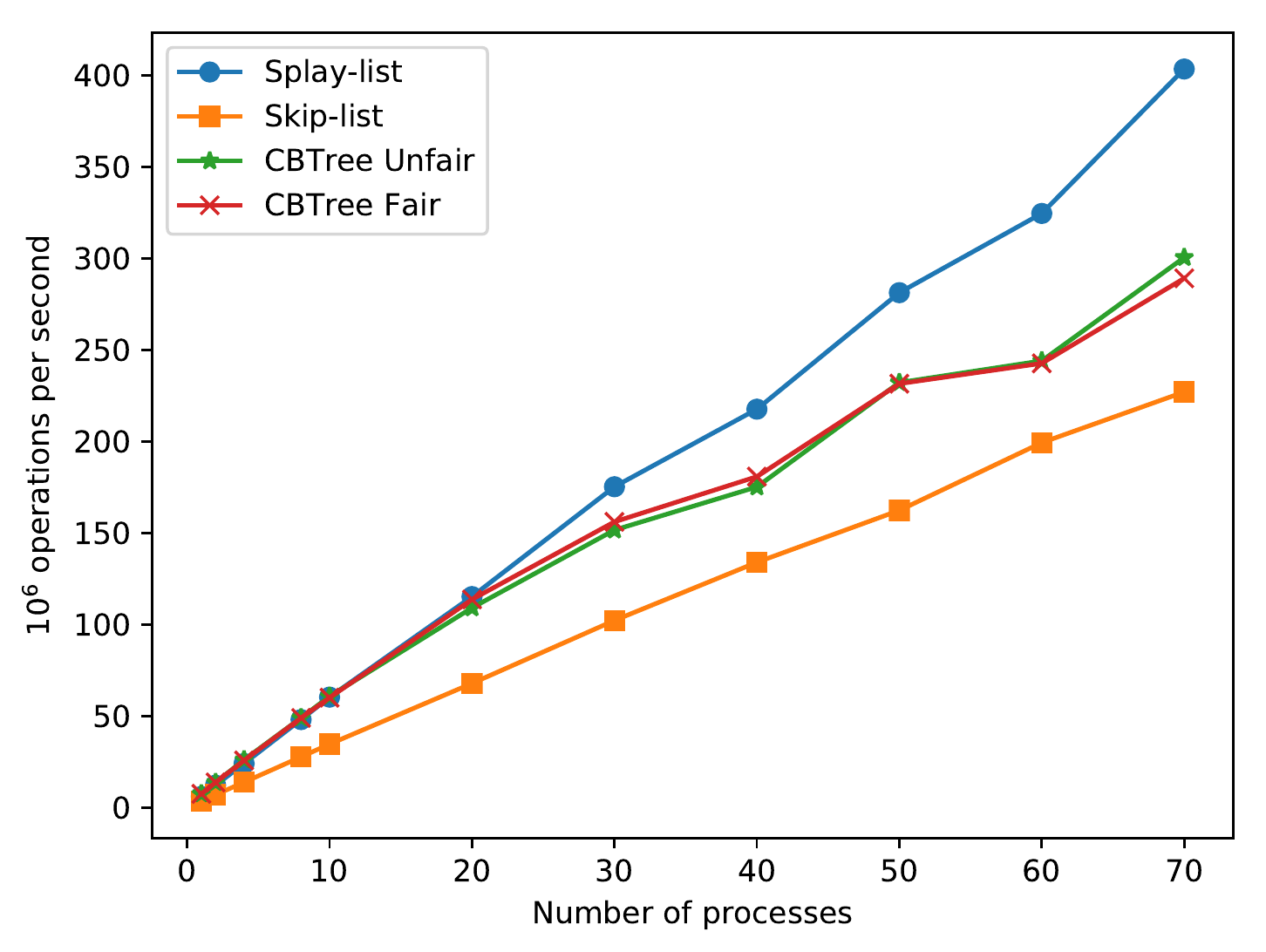}
\caption{$p = \nicefrac{1}{100}$}
\end{subfigure}%
\begin{subfigure}{.33\textwidth}
\includegraphics[width=\linewidth]{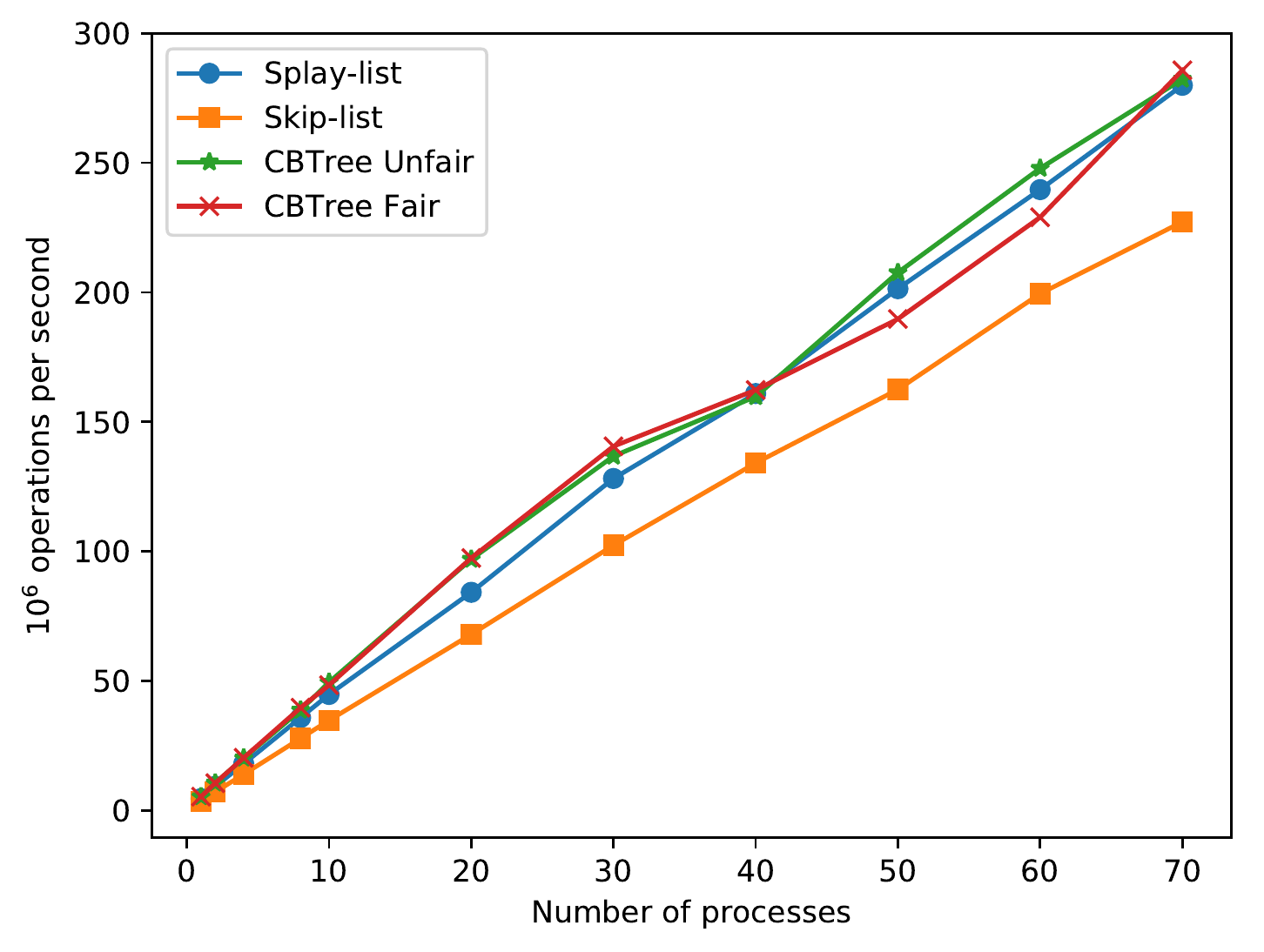}
\caption{$p = \nicefrac{1}{1000}$}
\end{subfigure}
\caption{Concurrent throughput for $10^5-99-1$ workload.}
\label{fig:99-1}
\vspace{-0.2cm}
\end{figure*}

\myparagraph{Concurrent evaluation.}
Next, we analyze concurrent performance.  
Unfortunately, the original implementation of the CBTree is not available, and we therefore re-implemented it in our framework. 
Here, we make an important distinction relative to usage: the authors of the CBTree paper propose to use a single thread to perform all the rebalancing. However, this approach is not standard, as in practice, updates could come at different threads. Therefore, we implement two versions of the CBTree, one in which updates are performed by a single thread (CBTree-Unfair), and one in which updates can be performed by every thread (CBTree-Fair). 
In both cases, synchronization between readers and writers is performed via an efficient readers-writers lock~\cite{rwlock}, which prevents concurrent updates to the tree. We note that in theory we could further optimize the CBTree to allow fully-concurrent updates via fine-grained synchronization. However, 1) this would require a significant re-working of their algorithm; 2)
as we will see below, this would not change  results significantly. 

Our experiments, presented in Figures~\ref{fig:90-10}, \ref{fig:95-5}, and \ref{fig:99-1}, analyze the performance of the \listname{} relative to standard skip-list and the CBTree across different workloads (one per figure), different update rates (one per panel), and thread counts (X axis).

Examining the figures, first notice the relatively good scalability of the \listname{} under all chosen update rates and workloads. By contrast, the CBTree scales well for moderately skewed workloads and low update rates, but performance decays for skewed workloads and high update rates (see for instance Figure \ref{fig:99-1}(a)). 
We note that, in the former case the CBTree matches the performance of the \listname{} in the low-update case (see Figure \ref{fig:90-10}(c)), but its performance can decrease significantly if the update rates are reasonably high ($p = \nicefrac{1}{100}$). 
We further note the limited impact of whether we consider the fair or unfair variant of the CBTree (although the Unfair variant usually performs  better).

These results may appear surprising given that the \listname{} generally has longer access paths. However, it benefits significantly from the fact that it allows additional concurrency, and that the caching mechanism serves to hide some of its additional access cost.
Our intuition here is that one critical measure is which fraction of the ``popular'' part of the data structure fits into the cache. 
This suggests that the \listname{} can be practically competitive relative to the CBTree on a subset of workloads.

\myparagraph{Additional Experiments.}
The experiments in Appendix~\ref{sec:additional-results} examine 1) the overheads in the uniform access case, 2) performance for a Zipf access distribution; 3) performance under moderate insert/delete rates. We also examine performance over longer runs, as well as the correlation between element height in the list and its ``popularity.''

\vspace{-1em}
\section{Discussion}
\vspace{-0.5em}

We revisited the question of efficient self-adjusting concurrent data structures, and presented the first instance of a self-adjusting concurrent skip-list, addressing an open problem posed by~\cite{CBTree}.
Our design ensures static optimality, and has an arguably simple structure and implementation, which allows for additional concurrency and good performance under skewed access. 
In addition, it is the first design to provide guarantees under approximate access counts, required for good practical behavior. 
In future work, we plan to expand the experimental evaluation to include a range of real-world workloads, and to prove the guarantees under concurrent access.

\newpage

\bibliographystyle{plain}
\bibliography{ref}

\appendix
\section{Deferred Proofs}
\label{app:bound}

{
\def\thetheorem{\ref{claim:E:bound}}
\begin{claim}
If $X \sim Bin_{n, p}$ and $np \geq 3 n^{2/3}$ then
$$\EE \left[ \log(X + 1) \right] \geq \log np - 4.$$
\end{claim}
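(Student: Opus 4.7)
The plan is to decompose
\[
\EE[\log(X+1)] \;=\; \log np \;+\; \EE\left[\log\!\left(1 + \frac{X - np + 1}{np}\right)\right],
\]
and then lower-bound the expected correction term by $-4$. The two ingredients needed are (i) a concentration bound that keeps $X$ close to $np$ with high probability, and (ii) a simple worst-case lower bound of $0$ for $\log(X+1)$ when $X$ strays far from its mean.

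For the concentration step, I would apply the multiplicative Chernoff bound $\mathrm{Pr}(|X-np|>\delta np)\leq 2\exp(-np\delta^2/3)$ with the choice $\delta = n^{-1/3}/p$, so that the deviation is $n^{2/3}$ and the tail probability becomes $2\exp(-n^{1/3}/(3p^2))$. On the concentrated event $\{|X-np|\leq n^{2/3}\}$, the hypothesis $np\geq 3n^{2/3}$ forces
\[
\left|\frac{X-np+1}{np}\right| \;\leq\; \frac{n^{2/3}+1}{np} \;\leq\; \tfrac{1}{2},
\]
so the standard inequality $\log(1+y)\geq -\frac{|y|}{1-|y|}$ for $|y|<1$ gives a lower bound that is an absolute constant (no larger than $1$ in magnitude) for the correction on this event.

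On the tail event $\{|X-np|>n^{2/3}\}$ I would use the trivial bound $\log(X+1)\geq 0$ and therefore
\[
\log\!\left(1+\tfrac{X-np+1}{np}\right) \;\geq\; -\log np.
\]
Combining the two regimes gives
\[
\EE\left[\log\!\left(1+\tfrac{X-np+1}{np}\right)\right] \;\geq\; -1 \;-\; 2\log(np)\cdot e^{-n^{1/3}/(3p^2)}.
\]

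The main obstacle is verifying that the tail term $2\log(np)\cdot e^{-n^{1/3}/(3p^2)}$ is bounded by an absolute constant (I expect $\leq 3$) uniformly over all $(n,p)$ satisfying $np\geq 3n^{2/3}$. The hypothesis rewrites as $n^{1/3}\geq 3/p$, so the exponent is at least $1/p^3$, which dominates $\log(np)\leq \log(n/p^2)$; a short monotonicity argument (checking the boundary $n=27/p^3$, where the product is maximized, and showing it equals $\log(27/p^2)\cdot e^{-1/p^3}$, which is largest and $O(1)$ when $p=1$) yields the needed constant bound. Plugging everything back in gives $\EE[\log(X+1)]\geq \log np - 4$, as claimed.
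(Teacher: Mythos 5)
Your proposal follows essentially the same route as the paper's proof: the same decomposition $\EE[\log(X+1)] = \log np + \EE[\log(1 + \frac{X-np+1}{np})]$, the same Chernoff bound with $\delta = \frac{1}{n^{1/3}p}$ giving deviation $n^{2/3}$, and the same trivial bound $\log(X+1)\geq 0$ on the tail event. The only cosmetic difference is that you bound the correction on the concentrated event via the closed-form inequality $\log(1+y)\geq -\frac{|y|}{1-|y|}$, whereas the paper sums the Taylor series into the equivalent geometric bound $-\frac{1}{1 - 2n^{2/3}/(np)}$; both yield an $O(1)$ loss and the same final estimate.
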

\addtocounter{theorem}{-1}
}
\begin{proof}
Recall the standard Chernoff bound, which says that if  $X \sim Bin_{n, p}$, then
$P(|X - np| > \delta np) \leq 2 e^{-\mu \delta^2 / 3}$.
Applying this with $\delta = \frac{1}{n^{1/3}p}$, we obtain $P(|X - np| > n^{\frac{2}{3}}) \leq 2 e^{-\frac{n^{1/3}}{3p^2}}$.

$\EE \log(X + 1) = \EE \log(np + (X - np + 1)) = \log np + \EE \log \left(1 + \frac{X - np + 1}{np}\right) = \log np + \sum\limits_{k = 0}^n p_k \log \left(1 + \frac{k - np + 1}{np}\right) \underset{\substack{\text{Taylor series and } \\ 1 + \frac{k - np + 1}{np} \geq \frac{1}{np}}}\geq \\ \geq \log np + \sum\limits_{k = np - n^{2/3}}^{np + n^{2/3}} p_k \left( \frac{k - np + 1}{np} - \frac{(k - np + 1)^2}{2n^2p^2} + \ldots \right) + P(|X - np| > n^{\frac{2}{3}}) \cdot \log \frac{1}{np} \geq \log np\,- \sum\limits_{k = np - n^{2/3}}^{np + n^{2/3}} p_k \left( \frac{2n^{2/3}}{np} + \frac{(2n^{2/3})^2}{2(np)^2} + \ldots \right) - 2 \log np \cdot e^{-\frac{n^{1/3}}{3p^2}} \underset{\sum_{k = np - n^{2/3}}^{np + n^{2/3}} p_k \leq 1}\geq \log np - \left(\frac{2n^{2/3}}{np} + \frac{(2n^{2/3})^2}{(np)^2}  + \ldots \right) - 2 \log np 
\cdot e^{-\frac{n^{1/3}}{3p^2}} = \log np - \frac{1}{1 - \frac{2n^{2/3}}{np}} - 2  \log np \cdot e^{-\frac{n^{1/3}}{3p^2}} \geq \log np - 3 - 2 \log np \cdot e^{-\frac{n^{1/3}}{3p^2}} \geq \log np - 4$.
\end{proof}

\section{Pseudo-code}
\label{sec:code}

In this section we introduce the pseudo-code for \texttt{contains} operation. \texttt{Insert} and \texttt{delete} (that simply marks) operations are performed similarly. The rebuild is a little bit complicated since we have to freeze whole data structure, however, since we talk about lock-based implementations it can be simply done by providing the global lock on the data structure.

The main class that is used is \texttt{Node} (Figure~\ref{fig:structures}). It contains nine fields: 1)~$key$ field stores the corresponding key, 2)~$value$ field stores the value stored for the corresponding key, 3)~$zeroLevel$ field indicates the lowest sub-list to which the object belongs (for lazy expansion), 4)~$topLevel$ field indicates the topmost sub-list to which the object belongs,
5)~$lock$ field allows to lock the object, 6)~$selfhits$ field stores the total number of hit-operations performed to $key$, i.e., $sh_{key}$, 7)~$next[h]$ is the succesor of the object in the sub-list of height $h$, 8)~$hits[h]$ equals to $hits^h_{key}$ or, in other words, $C^h_{key} - \idl{selfhits}$, and, finally, 9)~$deleted$ mark that indicates whether the key is logically deleted.
The \listname{} itself is represented by class \texttt{SplayList} with five fields: 1)~$m$ field stores the total number of hit-operations, 2)~$M$ field stores the total number of hit-operations to non-marked objects, 3)~$zeroLevel$ indicates the current lowest level (for lazy restructuring), 4)~$head$ and $tail$ are sentinel nodes with $-\infty$ and $+\infty$ keys, correspondingly.
Moreover, the algorithm has a parameter $p$ which is the probability how often we should perform the balancing part of contains function.

\begin{lstlisting}
class Node:
  K key
  V value
  int zeroLevel
  int topLevel
  Lock lock
  int selfhits
  Node next[MAX_LEVEL]
  int hits[MAX_LEVEL]
  bool deleted
  
class SplayList: 
  int m
  int M
  int zeroLevel
  Node head
  Node tail

SplayList list
double p
\end{lstlisting}
\vspace{-0.5cm}
\captionof{figure}{The data structure class definitions.}
\label{fig:structures}

The \texttt{contains} function is depicted at Figure~\ref{fig:contains}. If \texttt{find} did not find an object with the corresponding key then we return \texttt{false}. Otherwise, we execute balancing part, i.e., function \texttt{update}, with the probability $p$.
\begin{lstlisting}
fun contains(K key):
  Node node $\leftarrow$ find(key)
  if node = null:
    return false
  if random() < p:
    update(key)
  return not node.deleted
\end{lstlisting}
\vspace{-0.5cm}
\captionof{figure}{Contains function}
\label{fig:contains}

The \texttt{find} method which checks the existence of the $key$ almost identical to the standard \texttt{find} function in skip-lists. It is presented on the following Figure~\ref{fig:find}.
\begin{lstlisting}
fun find(K key):
  pred $\leftarrow$ list.head
  succ $\leftarrow$ head.next[MAX_LEVEL]
  for level $\leftarrow$ MAX_LEVEL-1 .. zeroLevel:
    updateUpToLevel(pred, level) |\label{line:level:check:1}|
    succ $\leftarrow$ pred.next[level]
    if succ = null:
      continue
    updateUpToLevel(succ, level) |\label{line:level:check:2}|
    while succ.key < key:
      pred $\leftarrow$ succ
      succ $\leftarrow$ pred.next[level]
      if succ = null:
        break
      updateUpToLevel(succ, level)
    if succ $\ne$ null and succ.key = key:
      return succ
  return null
\end{lstlisting}
\vspace{-0.5cm}
\captionof{figure}{Find function}
\label{fig:find}

Note, that as discussed in lazy expansion part, when we pass the object we check (Figure~\ref{fig:find} Lines~\ref{line:level:check:1}~and~\ref{line:level:check:2}) whether it should belong to lower levels, i.e., the expansion was performed, and if it is we update it. For the lazy expansion functions we refer to the next Figure~\ref{fig:auxiliary}.
\begin{lstlisting}
// this function is called only when node.lock is taken
fun updateZeroLevel(Node node):
  if node.zeroLevel > list.zeroLevel:
    node.hits[node.zeroLevel - 1] $\leftarrow$ 0
    node.next[node.zeroLevel - 1] $\leftarrow$ node.next[node.zeroLevel]
    node.zeroLevel--
  return
    
fun updateUpToLevel(Node node, int level):
  node.lock.lock()
  while node.zeroLevel > level:
    updateZeroLevel(node)
  node.lock.lock()
  return
\end{lstlisting}
\vspace{-0.5cm}
\captionof{figure}{Lazy expansion functions}
\label{fig:auxiliary}

The method \texttt{update} that performs the balancing phase in forward pass is presented on Figure~\ref{fig:update}.
\begin{lstlisting}
fun getHits(Node node, int h):
  if node.zeroLevel > h:
    return node.selfhits
  return node.selfhits + node.hits[h]

fun update(K key):
  currM $\leftarrow$ fetch_and_add(list.m)

  list.head.lock()
  list.head.hits[MAX_LEVEL]++
  Node pred $\leftarrow$ list.head
  for h $\leftarrow$ MAX_LEVEL-1 .. zeroLevel:
    while pred.zeroLevel > h:
      updateZeroLevel(pred)
    predpred $\leftarrow$ pred
    curr $\leftarrow$ pred.next[h]
    updateUpToLevel(curr, h)
    if curr.key > key:
      pred.hits[h]++
      continue
      
    found_key $\leftarrow$ false
    while curr.key $\leq$ key:
      updateUpToLevel(curr, h)
      acquired $\leftarrow$ false
      if curr.next[h].key > key:
        curr.lock.lock()
        if curr.next[h].key $\leq$ key:
          curr.lock.unlock()
        else:
          acquired $\leftarrow$ true
          if curr.key = key:
            curr.selfhits++
            found_key $\leftarrow$ true
          else:
            curr.hits[h]++
      // Ascent condition
      if h + 1 < MAX_LEVEL and h < predpred.topLevel and
          predpred.hits[h + 1] - predpred.hits[h] > $\frac{currM}{2^{MAX\_LEVEL - 1 - h - 1}}$:
        if not acquired:
          curr.lock.lock()
        curh $\leftarrow$ curr.topLevel
        while curh + 1 < MAX_LEVEL and curh < predpred.topLevel and
            predpred.hits[curh + 1] - predpred.hits[curh] >
                                          $\frac{currM}{2^{MAX\_LEVEL - 1 - curh - 1}}$:
          curr.topLevel++
          curh++
          curr.hits[curh] $\leftarrow$ predpred.hits[curh] -
              predpred.hits[curh - 1] - curr.selfhits
          curr.next[curh] $\leftarrow$ predpred.next[curh]
          predpred.hits[curh] $\leftarrow$ predpred.hits[curh - 1]
          predpred.next[curh] $\leftarrow$ curr
        predpred $\leftarrow$ curr
        pred $\leftarrow$ curr
        curr $\leftarrow$ curr.next[h]
        continue
      // Descent condition
      elif curr.topLevel = h and curr.next[h].key $\leq$ key and
          getHits(curr, h) + getHits(pred, h) $\leq \frac{currM}{2^{MAX\_LEVEL - 1 - h}}$:
        currZeroLevel $\leftarrow$ list.zeroLevel
        if pred $\neq$ predpred:
          pred.lock.lock()
        curr.lock.lock()
        // Check the conditions that nothing has changed
        if curr.topLevel $\neq$ h or
            getHits(curr, h) + getHits(pred, h) > $\frac{currM}{2^{MAX\_LEVEL - 1 - h}}$  or
            curr.next[h].key > key or pred.next[h] $\ne$ curr:
          if pred $\ne$ predpred:
            pred.lock.unlock()
          curr.lock.unlock()
          curr $\leftarrow$ pred.next[h]
          continue
        else:
          if h = currZeroLevel:
            CAS(list.zeroLevel, currZeroLevel, currZeroLevel - 1)
          if curr.zeroLevel > h - 1:
            updateZeroLevel(curr)
          if pred.zeroLevel > h - 1:
            updateZeroLevel(pred)
          pred.hits[h] $\leftarrow$ pred.hits[h] + getHits(curr, h)
          curr.hits[h] $\leftarrow$ 0
          pred.next[h] $\leftarrow$ curr.next[h]
          curr.next[h] $\leftarrow$ null
          if pred $\neq$ predpred:
            pred.lock.unlock()
          curr.topLevel--
          curr.lock.unlock()
          curr $\leftarrow$ pred.next[h]
          continue
      pred $\leftarrow$ curr
    if predpred $\neq$ pred:
      predpred.lock.unlock()
    if found_key:
      pred.lock.unlock()
      return
  pred.lock.unlock()
\end{lstlisting}
\vspace{-0.5cm}
\captionof{figure}{Pseudocode of the update function.}
\label{fig:update}

\section{Additional Experimental Results}
\label{sec:additional-results}

\subsection{Uniform workload: $10^5-100-100$}
We consider a uniform workload $10^5-100-100$, i.e., the arguments of \texttt{contains} operations are chosen uniformly at random (Figure~\ref{fig:100-100}). As expected we lose performance lose relative to the skip-list due to the additional work our data structure performs. 
Note also that the CBTree outperforms \Listname{} in this setting. This is also to be expected, since the access cost, i.e., the number of links to traverse, is less for the CBTree.

\begin{figure*}
\begin{subfigure}{.33\textwidth}
\includegraphics[width=\linewidth]{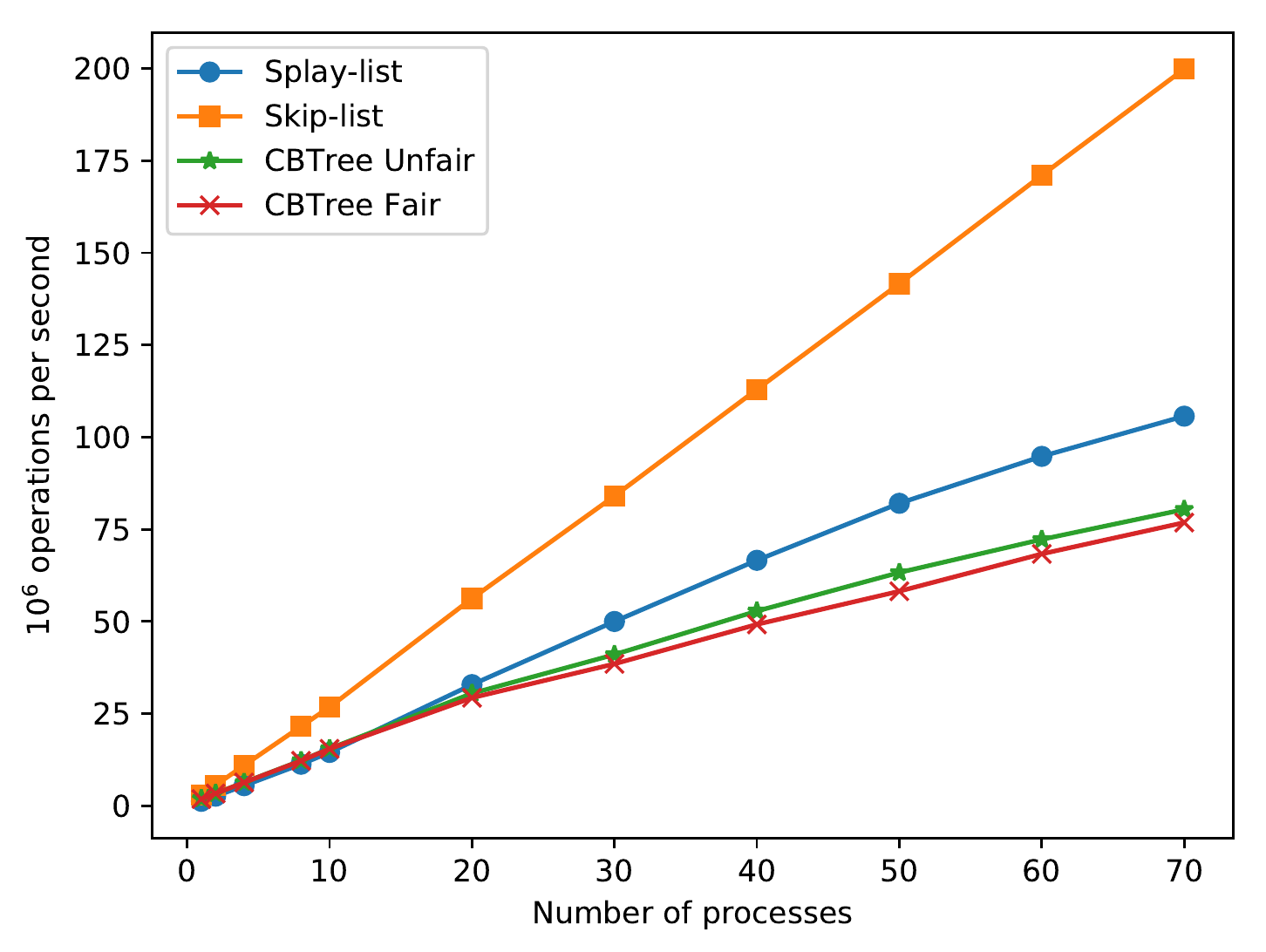}
\caption{$p = \nicefrac{1}{10}$}
\end{subfigure}
\begin{subfigure}{.33\textwidth}
\includegraphics[width=\linewidth]{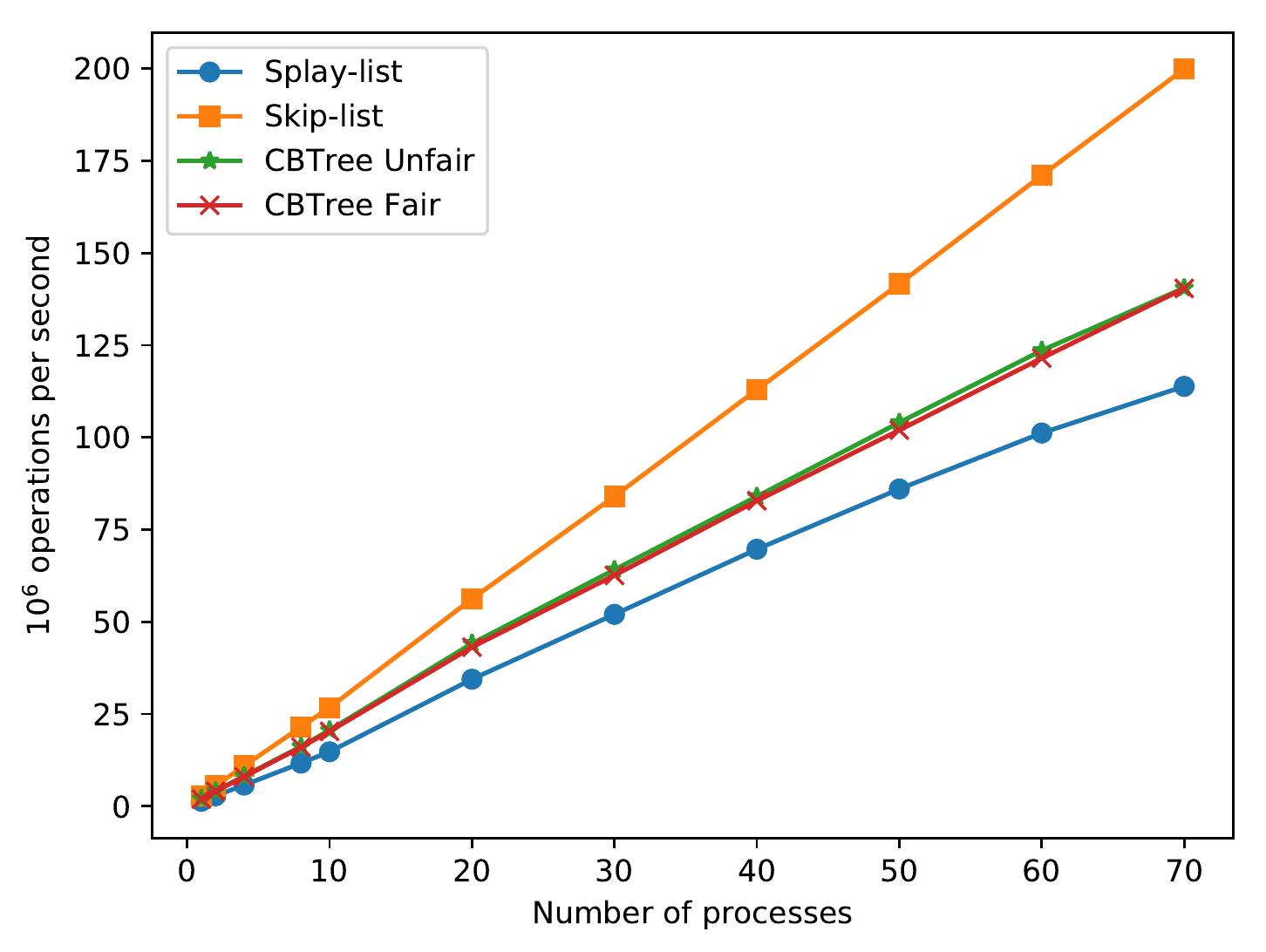}
\caption{$p = \nicefrac{1}{100}$}
\end{subfigure}%
\begin{subfigure}{.33\textwidth}
\includegraphics[width=\linewidth]{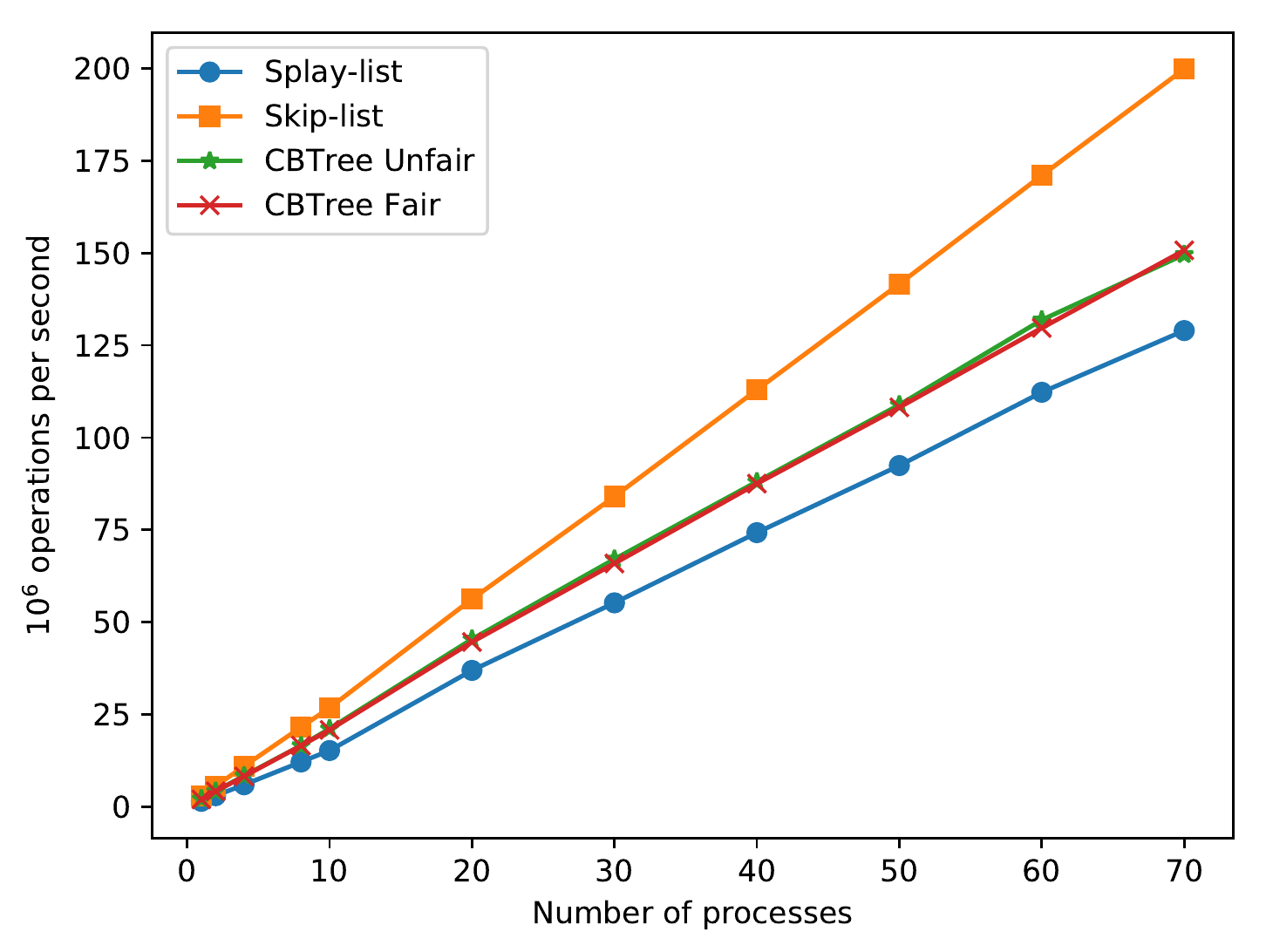}
\caption{$p = \nicefrac{1}{1000}$}
\end{subfigure}
\caption{Concurrent throughput for uniform workload.}
\label{fig:100-100}
\end{figure*}

\subsection{Zipf Distribution}
We also ran the data structures on an input coming from a Zipf distribution with the skew parameter set to $1$, which is the standard value: for instance, the frequency of words in the English language satisfies this parameter. As one can see on Figure~\ref{fig:zipf}, our \listname{} outperforms or matches all other data structures. 

\begin{figure*}
\begin{subfigure}{.33\textwidth}
\includegraphics[width=\linewidth]{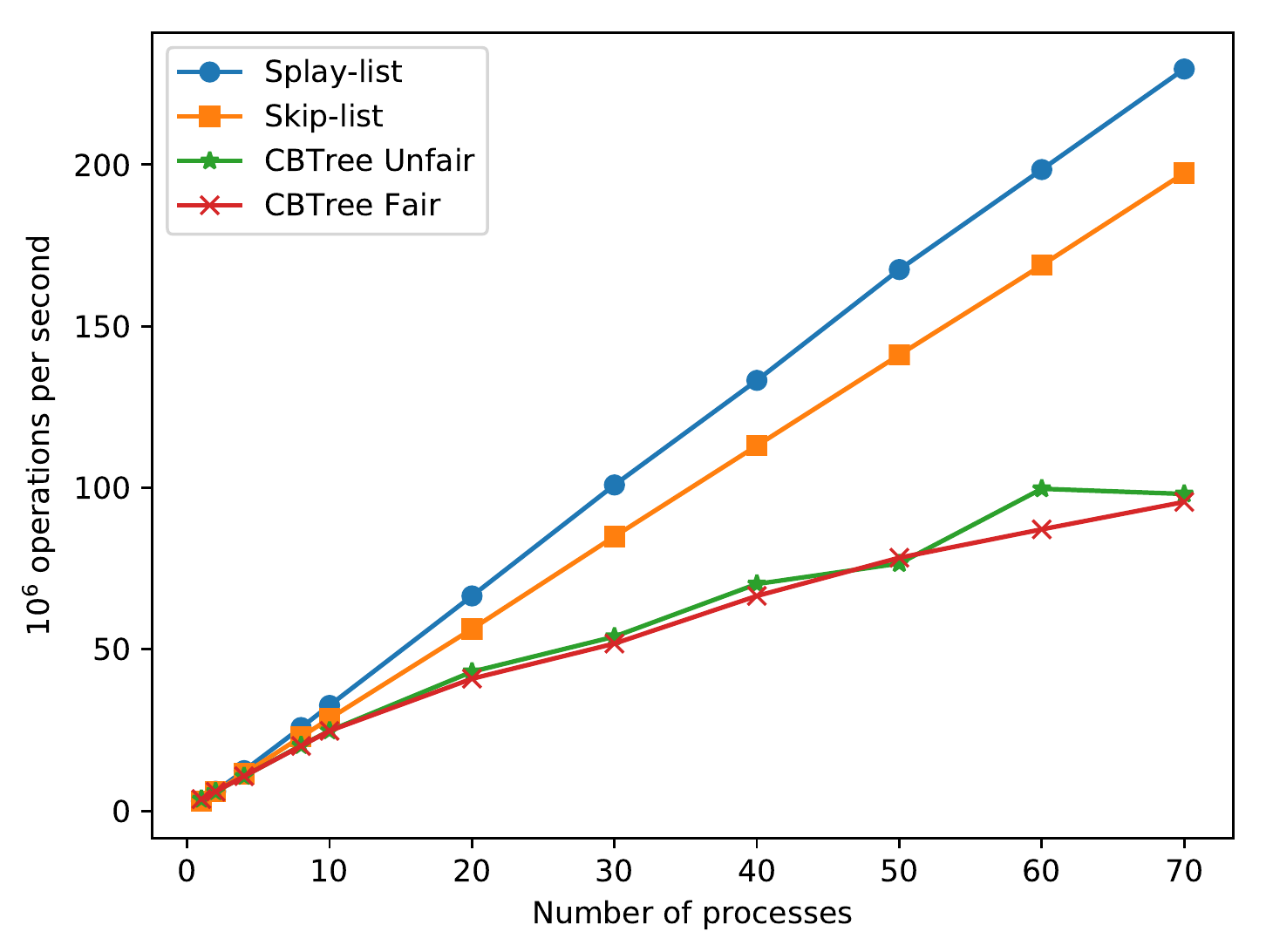}
\caption{$p = \nicefrac{1}{10}$}
\end{subfigure}
\begin{subfigure}{.33\textwidth}
\includegraphics[width=\linewidth]{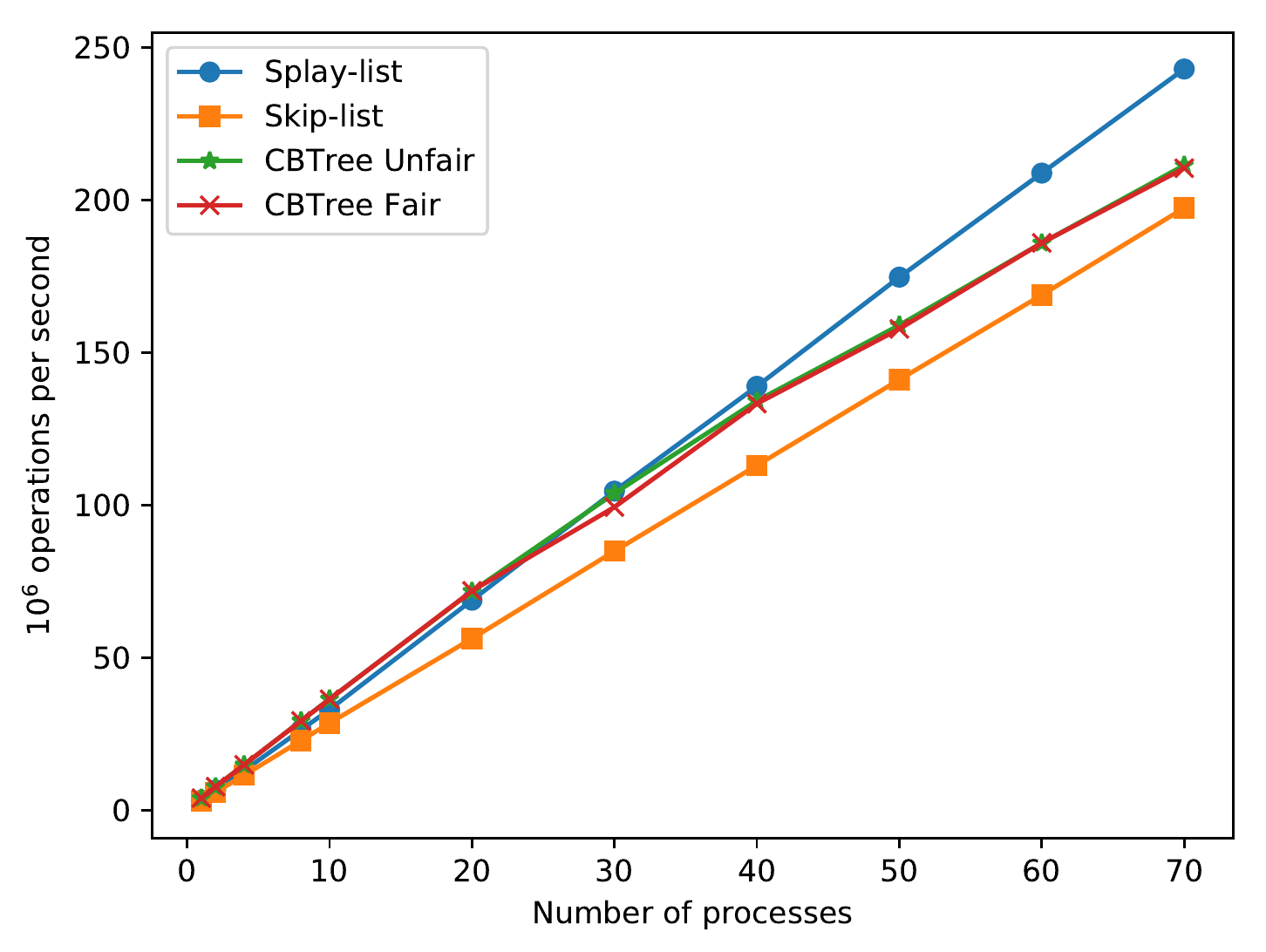}
\caption{$p = \nicefrac{1}{100}$}
\end{subfigure}%
\begin{subfigure}{.33\textwidth}
\includegraphics[width=\linewidth]{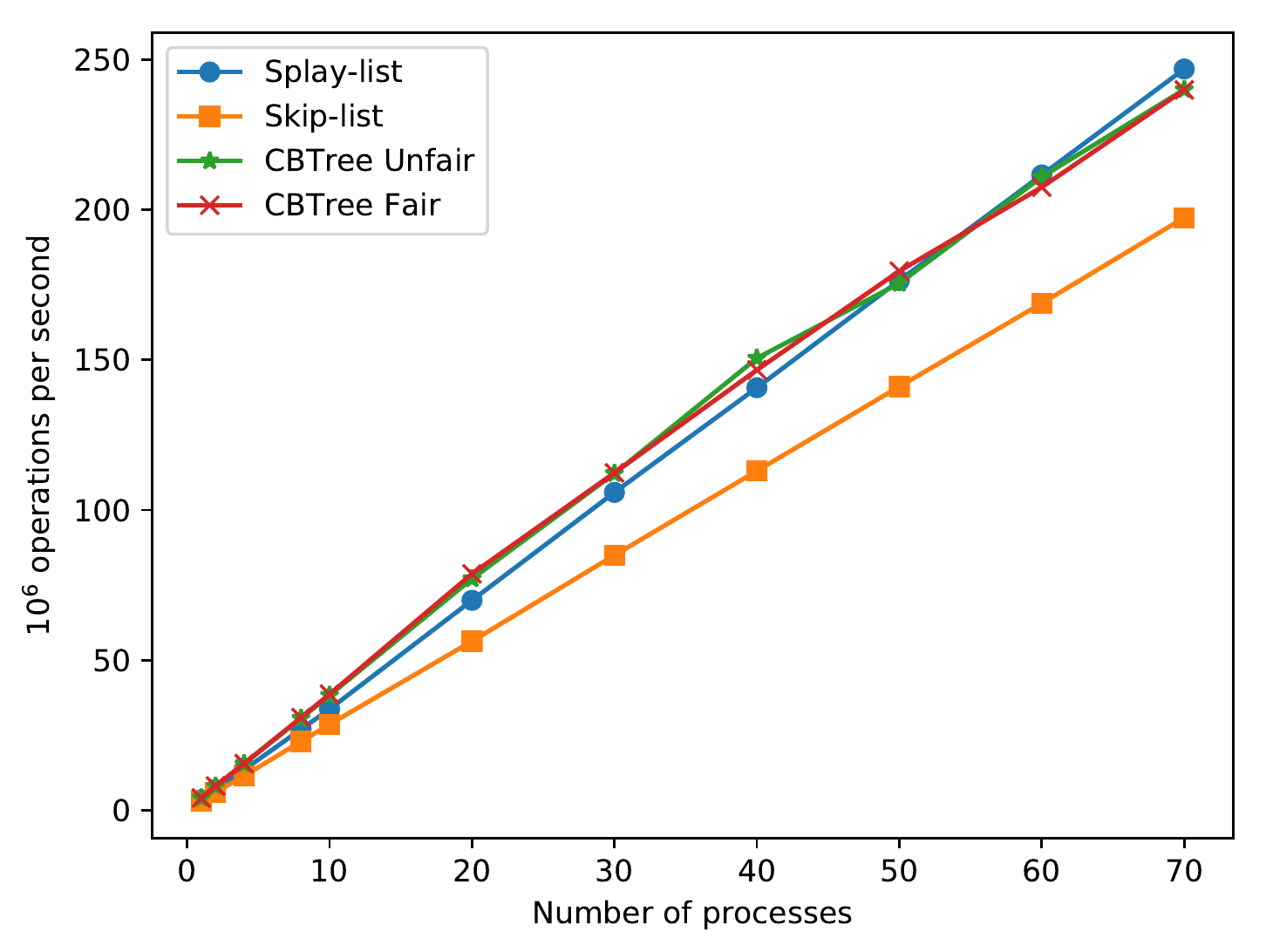}
\caption{$p = \nicefrac{1}{1000}$}
\end{subfigure}
\caption{Concurrent throughput on Zipf $1$ 
workload.}
\label{fig:zipf}
\end{figure*}

\subsection{General workloads}
In addition to read-only workloads we implemented general workloads, allowing for inserts and deletes, in our framework. General workloads are specified by five parameters $n-r-x-y-s$: 
\begin{enumerate}
\item ~$n$, the size of the workset of keys; \item ~$r\%$, the amount of \texttt{contains} performed; \item ~$x\%$ of \texttt{contains} are performed on $y\%$ of keys; \item ~\texttt{insert} and \texttt{delete} chooses a key uniformly at random from $s\%$ of keys.
\end{enumerate}

More precisely, we choose $n$ keys as set $S$ and we pre-populate the \listname{}: we add a key from $S$ with probability $00\%$. Then, we choose $s \cdot n$ keys uniformly at random to get $W$ key set. Also, we choose $y \cdot n$ keys from \emph{inserted} keys to get $R$ key set. We start $T$ threads, each of which chooses an operation: with probability $r\%$ it chooses \texttt{contains} and with probabilities $\frac{100-r}{2}\%$ it chooses \texttt{insert} or \texttt{delete}. Now, the thread has to choose an argument of the operation: for \texttt{contains} operation it chooses an argument from $R$ with probability $x\%$, otherwise, it chooses an argument from $S \setminus R$; for \texttt{insert} and \texttt{delete} operations it chooses an argument from $W$ uniformly at random.

We did not perform a full comparison with all other data structures (skip-list and the CBTree). However, we did a comparison to the splay-list iteself on the following two types of workloads: read-write workloads, $10^5-98-90-10-25$, $10^5-98-95-5-25$ and $10^5-98-99-1-25$~--- choosing \texttt{contains} operation with probability $98\%$, and \texttt{insert} and \texttt{delete} operations takes one quarter of elements as arguments; and read-only workloads, $10^5-0-90-10-0$, $10^5-0-95-5-0$ and $10^5-0-99-1-0$~--- read-only workload. 

The intuition is that the \listname{} should perform better on the second type of workloads, but by how much? We answer this question: the overhead does not exceed $15\%$ on $99-1$-workloads, does not exceed $7\%$ on $95-5$-workloads, and does not exceed $5\%$ on $90-1$-workloads.  As expected, the less a workload is skewed, the less the overhead. By that, we obtain that the small amount of \texttt{insert} and \texttt{delete} operations does not affect the performance significantly.

\subsection{Longer executions}
We run the \listname{} with the best parameter $p = \frac{1}{100}$ for ten minutes on one process on the following distributions: $10^5-90-10$, $10^5-95-5$, $10^5-99-1$ and Zipf with parameter $1$. Then, we compare the measured throughput per second with the throughput per second on runs of ten seconds. Obviously, we expect that the throughput increases since the data structure learns more and more about the distribution after each operation. And it indeed happens as we can see on Table~\ref{table:10mins}. In the long run, the improvement is up to 30\%.

\begin{table}
\center
\begin{tabular}{|c|c|c|}
\hline
Distribution & 10 sec & 10 min \\\hline
$10^5-90-10$ & 2777150 & 3630640 (+30\%) \\\hline
$10^5-95-5$  & 3401220 & 4403906 (+29\%) \\\hline
$10^5-99-1$  & 6707690 & 8184215 (+22\%) \\\hline
Zipf $1$     & 3806500 & 4261981 (+12\%) \\\hline
\end{tabular}
\caption{Comparison of the throughput on runs for 10 seconds and 10 minutes}
\label{table:10mins}
\end{table}

\subsection{Correlation between Key Popularity and Height}
We run the \listname{} with the best parameter $p = \frac{1}{100}$ for $100$ seconds on one process on the following distributions: $10^5-90-10$, $10^5-95-5$, $10^5-99-1$ and Zipf with parameter $1$. Then, we build the plots (see Figure~\ref{fig:correlation}) where for each key we draw a point ($x$, $y$) where $x$ is the number of operations per key and $y$ is the height of the key. We would expect that the larger the number of operations, the higher the nodes will be. This is obviously the case under Zipf distribution. With other distributions the correlation is not immediately obvious, however, one can see that if the number of operations per key is high, then the lowest height of the key is much higher than $1$.

\begin{figure*}
\begin{tabular}{c c}
\begin{subfigure}{.5\textwidth}
\includegraphics[width=\linewidth]{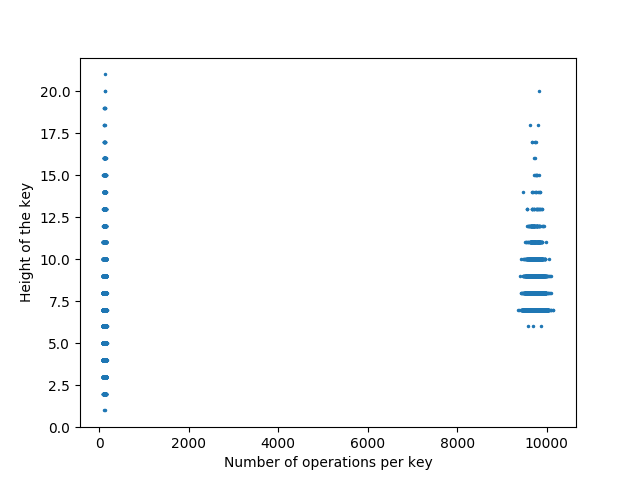}
\caption{Distribution $10^5-90-10$}
\end{subfigure} &
\begin{subfigure}{.5\textwidth}
\includegraphics[width=\linewidth]{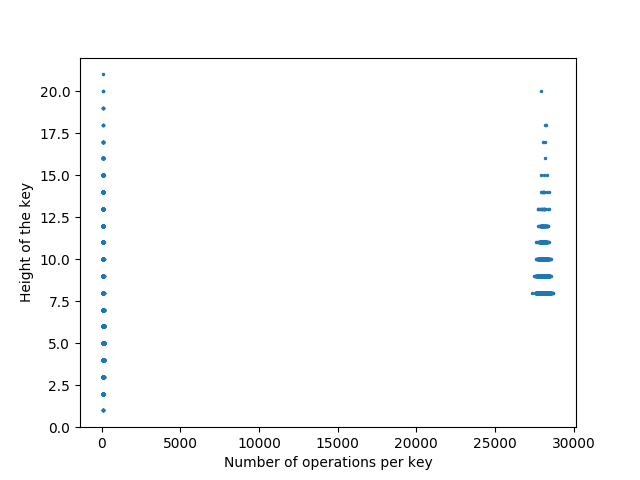}
\caption{Distribution $10^5-95-5$}
\end{subfigure} \\
\begin{subfigure}{.5\textwidth}
\includegraphics[width=\linewidth]{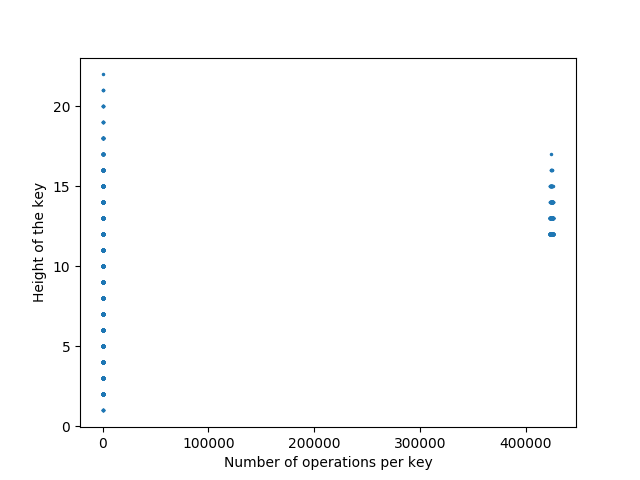}
\caption{Distribution $10^5-99-1$}
\end{subfigure} &
\begin{subfigure}{.5\textwidth}
\includegraphics[width=\linewidth]{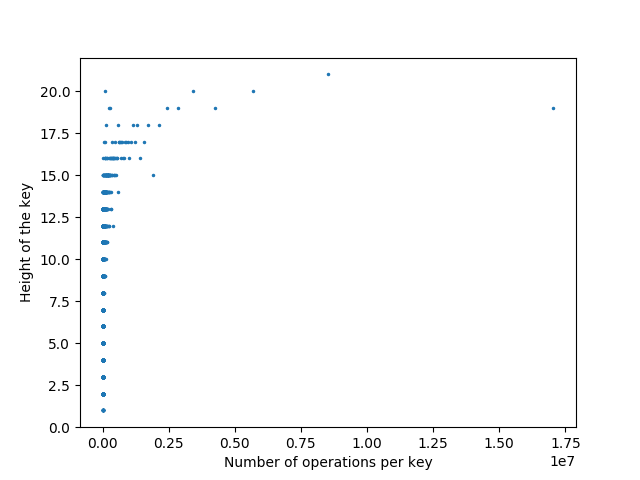}
\caption{Zipf distribution with parameter $1$}
\end{subfigure}
\end{tabular}
\caption{The correlation between the popularity and the height}
\label{fig:correlation}
\end{figure*}

\end{document}